\newtheorem{theorem}{Theorem}
\newtheorem{lemma}[theorem]{Lemma}
\newtheorem{proposition}[theorem]{Proposition}
\newenvironment{theorem*}[1][Theorem]{\begin{trivlist}\itshape
\item[\hskip \labelsep {\bfseries #1}]}{\normalfont\end{trivlist}}
\newenvironment{remark*}[1][Remark]{\begin{trivlist}
\item[\hskip \labelsep {\bfseries #1}]}{\hfill$\square$\end{trivlist}}
\newenvironment{example*}[1][Example]{\begin{trivlist}
\item[\hskip \labelsep {\bfseries #1}]}{\hfill$\square$\end{trivlist}}
\newcommand{\bigtimes}{\mathop{\mbox{\LARGE$\boldsymbol\times$}}}
\renewcommand{\phi}{\varphi}
\renewcommand{\rho}{\varrho}
\renewcommand{\theta}{\vartheta}
\renewcommand{\d}{\partial}
\newcommand{\vol}{\ensuremath{\mathrm{vol}}}
\newcommand{\eps}{\ensuremath{\varepsilon}}
\newcommand{\ex}{\exists}
\newcommand{\fa}{\forall}
\newcommand{\lnorm}{\left\lVert}
\newcommand{\rnorm}{\right\lVert}
\newcommand{\lbetr}{\left\lvert}
\newcommand{\rbetr}{\right\lvert}
\newcommand{\norm}[1]{\lnorm {#1}\rnorm}
\newcommand{\abs}[1]{\lbetr {#1}\rbetr}
\renewcommand{\l}{\ensuremath{\left}}
\renewcommand{\r}{\ensuremath{\right}}
\newcommand{\sse}{\ensuremath{\subseteq}}
\newcommand{\lin}{\opn{lin}}
\newcommand{\tr}{\ensuremath{\opn{tr}}}
\newcommand{\ubr}{\underbrace}
\newcommand{\oli}{\overline}
\newcommand{\opn}{\operatorname}
\newcommand{\then}{\ensuremath{\Rightarrow}}
\newcommand{\texp}{\ensuremath{\mathrm{texp}}}
\newcommand{\Gf}{\ensuremath{\mathfrak{G}}}
\newcommand{\Hp}{\ensuremath{\mathcal{H}}}
\newcommand{\Dp}{\ensuremath{\mathcal{D}}}
\newcommand{\Sp}{\ensuremath{\mathcal{S}}}
\newcommand{\Lp}{\ensuremath{\mathcal{L}}}
\newcommand{\Wp}{\ensuremath{\mathcal{W}}}
\newcommand{\rn}{\ensuremath{\mathbb{R}}}
\newcommand{\cn}{\ensuremath{\mathbb{C}}}
\newcommand{\nn}{\ensuremath{\mathbb{N}}}
\newcommand{\zn}{\ensuremath{\mathbb{Z}}}
\newcommand{\bra}[1]{\ensuremath{\left\langle{#1}\right|}}
\newcommand{\ket}[1]{\ensuremath{\left|{#1}\right\rangle}}
\newcommand{\disc}{\ensuremath{\mathrm{disc}}}
\newcommand{\clim}{\ensuremath{\mathrm{c\text{-}lim}}}
\newcommand{\id}{\ensuremath{\mathrm{id}}}
\newcommand{\diag}{\ensuremath{\mathrm{diag}}}
\begin{document}

\preprint{DESY 18-139}

\title[Zeta-regularized \mbox{vacuum expectation values}]{Zeta-regularized \mbox{vacuum expectation values}}

\author{T. Hartung}
%\email{tobias.hartung@kcl.ac.uk}
\affiliation{Department of Mathematics, King's College London, Strand, London WC2R 2LS, United Kingdom}
\author{K. Jansen}
%\email{Karl.Jansen@desy.de}
\affiliation{NIC, DESY Zeuthen, Platanenallee 6, 15738 Zeuthen, Germany}

\date{\today}

\begin{abstract}
  It has recently been shown that vacuum expectation values and Feynman path integrals can be regularized using Fourier integral operator $\zeta$-function, yet the physical meaning of these $\zeta$-regularized objects was unknown.

  Here we show that $\zeta$-regularized vacuum expectations appear as continuum limits using a certain discretization scheme. Furthermore, we study the rate of convergence for the discretization scheme using the example of a one-dimensional hydrogen atom in $(-\pi,\pi)$ which we evaluate classically, using the Rigetti Quantum Virtual Machine, and on the Rigetti 8Q quantum chip ``Agave'' device. We also provide the free radiation field as an example for the computation of $\zeta$-regularized vacuum expectation values in a gauge theory.
\end{abstract}

\pacs{03.65.Ca, 03.65.Db, 11.10.Cd, 03.67.Ac, 02.30.Tb}
\keywords{Feynman path integral, $\zeta$-regularized vacuum expectation values, quantum computation, Fourier integral operators}

\maketitle

\section{Introduction}
In a quantum field theory (QFT), vacuum expectation values are fundamental objects. As expectation values of observables they allow for experimental verification and to test theoretical models. Furthermore, the Wightman Reconstruction Theorem asserts that a QFT is uniquely determined by its $n$-point functions which are tempered (by the Wightman axioms) distributions whose point evaluations (on functions) are vacuum expectation values of $n$ field operators. Hence, given a QFT with Hilbert space $\Hp$, vacuum state $\psi$, and an operator $A$, we are interested in computing the vacuum expectation value $\langle A\rangle$ of $A$;
\begin{align*}
  \langle A\rangle:=\bra\psi A\ket\psi:=\langle\psi,A\psi\rangle_\Hp.
\end{align*}
In general, we do not have access to $\psi$ but it is possible~\cite{creutz-freedman,feynman,feynman-hibbs-styer} to express $\langle A\rangle$ in terms of operator traces. Let $U=\texp\l(-\frac{i}{\hbar}\int_0^TH(s)ds\r)$ the wave propagator of the QFT where $H$ denotes the Hamiltonian and $\texp$ the time-ordered exponential. Then 
\begin{align*}
  \langle A\rangle=\lim_{T\to\infty}\frac{\tr(UA)}{\tr U}.
\end{align*}
Again, we are in a precarious situation since, in general, neither $U$ nor $UA$ are trace-class operators in $\Hp$. This indicates that the difficulty in defining vacuum expectation values with this approach is the construction of these traces.

  Considering only the partition function $\tr U$, Hawking~\cite{hawking} observed that it is possible to relate $\tr U$ to a $\zeta$-function trace construction for pseudo-differential operators.\cite{kontsevich-vishik,kontsevich-vishik-geometry,ray,ray-singer} Since this is a spectral approach to the trace construction, it requires explicit diagonalization of a second order differential operator which is induced by the background fields and the quadratic term of metric fluctuation. As such, explicit computation of spectrally $\zeta$-regularized partition functions are next to impossible in a non-trivial theory. Furthermore, the approach is not easily extended to the numerator $\tr(UA)$ and the physical meaning of the resulting $\zeta$-regularized partition function remained unknown.

Nevertheless it was shown~\cite{hartung} that both traces, $\tr(UA)$ and $\tr U$, can be constructed in a non-perturbative way using Fourier integral operator $\zeta$-functions. In this formulation, we assume that the Hamiltonian $H$ and the operator $A$ are pseudo-differential operators on a compact Riemannian $C^\infty$-manifold $X$ without boundary (a Cauchy surface of the ``universe''; the infinite volume limit $X\to$ ``non-compact manifold'' is taken after $\zeta$-regularization\cite{hartung} and henceforth ignored for the purposes of this paper) and the Hilbert space $\Hp$ is a Sobolev space $W_2^s(X)$ for some $s\in\rn$. The operators $U$ and $A$ are then bounded linear operators mapping $W_2^s(X)$ to $W_2^{s'}(X)$ for some $s'\in\rn$. It is then necessary to construct a suitable holomorphic family of operators\cite{hartung,hartung-phd,hartung-iwota,hartung-scott} $(\Gf(z))_{z\in\cn}$ which (among a number of technical properties) satisfies two important conditions; namely, $\Gf(0)=1$ and $\ex R\in\rn\ \fa z\in\cn_{\Re(\cdot)<R}:\ U\Gf(z)$ and $U\Gf(z)A$ are of trace-class. Hence, considering the families $U\Gf A$ and $U\Gf$, we can recover the operators we are interested in through point evaluation in zero and for $\Re(z)<R$ the traces $\tr(U\Gf(z)A)$ and $\tr(U\Gf(z))$ are well-defined. In fact, the maps 
\begin{align*}
  \zeta_0(U\Gf A):\ \cn_{\Re\cdot<R}\to\cn;\ z\to\tr(U\Gf(z)A)
\end{align*}
and
\begin{align*}
  \zeta_0(U\Gf):\ \cn_{\Re\cdot<R}\to\cn;\ z\to\tr(U\Gf(z))
\end{align*}
have meromorphic extensions to $\cn$ with at most isolated simple poles. We will denote these extensions by $\zeta(U\Gf A)$ and $\zeta(U\Gf)$ respectively, and we can define the $\zeta$-regularized vacuum expectation value $\langle A\rangle_\Gf$ of $A$ with respect to $\Gf$ as
\begin{align*}
  \langle A\rangle_\Gf:=\lim_{T\to\infty}\frac{\zeta(U\Gf A)}{\zeta(U\Gf)}
\end{align*}
which is meromorphic again. Finally, we are interested in computing $\langle A\rangle_\Gf(0)$ which is ``almost always'' independent\cite{hartung-iwota,hartung} of the choice of $\Gf$ and, in general, \emph{very} difficult to compute. For examples of $\zeta$-regularized vacuum expectation values using Fourier integral operators we refer to references~\cite{hartung,hartung-iwota,hartung-jansen}.

At this stage we therefore have a fully regularized expectation value, but the physical meaning of $\langle A\rangle_\Gf(0)$ is still unclear. Hence, it is precisely the purpose of this paper to provide, for the first time, a physical interpretation of $\langle A\rangle_\Gf(0)$.

\begin{example*}
  This kind of construction of $\Gf$ is called a ``gauge'' in the mathematical literature~\cite{guillemin} and related to many questions in geometric analysis. For instance, let us consider the operator $\abs\d$ on the flat torus $\rn/2\pi\zn$. As we will see later, this is a simplification of the operator considered in Section~\ref{sec:radiation}. Its spectrum is $\sigma(\abs\d)=\{\abs n;\ n\in\zn\}$ counting multiplicities. Hence, we can deduce that $\Gf(z):=\abs\d^z$ has spectrum $\sigma(\Gf(z))=\{\abs n^z;\ n\in\zn\}$ which is absolutely summable, i.e., in $\ell_1(\zn)$, whenever $\Re(z)<-1$.

  Suppose that $\abs \d$ is indeed the Hamiltonian of a quantum field theory we wish to consider. Then $U=\exp\l(-\frac{i}{\hbar}T\abs\d\r)$ is a unitary time evolution operator and $U\Gf(z)$ is of trace-class whenever $\Re(z)<-1$ since all Schatten classes are two-sided ideals in the set of bounded operators on a Hilbert space (here $L_2(\rn/2\pi\zn)$).
\end{example*}

\begin{example*}
  More generally, Radzikowski~\cite{radzikowski-phd,radzikowski} showed that given the Hadamard condition, the operators $U$, $A$, and $\Gf(z)$ can be assumed to be Fourier integral operators. Since the product $U\Gf(z)$ needs to be computed, it is often advantageous to choose $\Gf(z)$ as a function of the Hamiltonian $H$, e.g., $H^z$. This ensures that $U\Gf(z)$ can be computed in terms of a functional calculus. In particular, this ensures that $\Gf(z)$ is a Fourier integral operator of order $\gamma z$ for some $\gamma>0$ if $H$ is unbounded on the Hilbert space. In other words, $\Gf(z)$ is of trace-class whenever $\Re(z)$ is sufficiently negative. For example, if $H$ is a pseudo-differential operator on a manifold $X$, then $\Gf(z)$ is of trace-class provided $\Re(z)<-\frac{\dim X}{\gamma}$.
\end{example*}

\begin{remark*}
  In general, computing these traces through spectral decomposition is very difficult. Instead we use that Fourier integral operators are integral operators whose kernels are much more easily accessible than the spectrum.

  Let $A$ be a pseudo-differential operator with symbol $\sigma(x,y,\xi)$ on an open subset $U$ of $\rn^n$; in other words, $A$ is given by $Au(x)=\int_{\rn^n}\int_Ue^{i\langle x-y,\xi\rangle}\sigma(x,y,\xi)u(y)dyd\xi$. Then, the kernel $k$ of $A$ is $k(x,y)=\int_{\rn^n}e^{i\langle x-y,\xi\rangle}\sigma(x,y,\xi)d\xi$. Thus, if $A$ is a trace-class operator, then\cite{brislawn} $\tr A=\int_Uk(x,x)dx=\int_U\int_{\rn^n}\sigma(x,x,\xi)d\xi dx$. This can be lifted to the manifold case\cite{scott}, that is, if $A$ is locally given by a symbol $\sigma$, then $\tr A$ is given by the density $\int_{\rn^n}\tr\sigma(x,x,\xi)d\xi\abs{dx}$.

  Similarly, if $A$ is a trace-class Fourier integral operator whose (matrix-valued) kernel is locally given by $k(x,y)=\int_{\rn^n}e^{i\vartheta(x,y,\xi)}\sigma(x,y,\xi)d\xi$, then its trace can be computed integrating the density $\int_{\rn^n}\tr(e^{i\vartheta(x,x,\xi)}\sigma(x,x,\xi))d\xi\abs{dx}$.
\end{remark*}

Numerically, we typically access discretized systems which can be understood abstractly by assuming that the Hilbert space is finite dimensional and the operators are matrices. In this sense, we will consider restrictions to finite dimensional subspaces to be discretizations and approximations of the infinite dimensional Hilbert space to be a ``continuum limit''. This therefore includes cases of lattice discretizations but also allows for ``discretization schemes'' which do not formally discretize space-time. For instance, we will consider discretization of Fourier modes in Section~\ref{sec:convergence} which extrapolate to the ``continuum of Fourier modes''.

Defining the traces is thus no problem but we need to compute the continuum limit. Since the regularized traces are defined via meromorphic extension, it is neither obvious that any such continuum limit should exist nor that it coincides with the value $\langle A\rangle_\Gf(0)$.

In this article, we will discuss a method of discretization which ensures that the continuum limit exists and we will prove that it coincides with $\langle A\rangle_\Gf(0)$. Furthermore, the chosen method of discretization is interesting in the context of quantum computing which allows us to approximate discretized vacuum states $\psi_\disc$, e.g., using a variational quantum eigensolver.\cite{vqe}

In particular, we will prove the following Theorem~\ref{thm:continuum-limit} which states that continuum limits are precisely the $\zeta$-regularized vacuum expectation values.
\begin{theorem}\label{thm:continuum-limit}
  Let $\psi_n$ be the vacuum as computed using the discretization scheme $\disc$ (cf. section~\ref{sec:disc}) and $A$ and $\Gf$ such that the sequences $(z\mapsto\norm{\Gf(z)A\psi_n})_{n\in\nn}$ and $(z\mapsto\norm{\Gf(z)\psi_n})_{n\in\nn}$ are locally bounded in $C(\cn)$. Furthermore, let the assumptions of Proposition~\ref{prop:continuum-limit} be satisfied. 

  Then the continuum limit $\clim$ of discretized vacuum expectation values $\langle A_\disc\rangle$ exists and satisfies
  \begin{align*}
    \langle A\rangle_\Gf(0)=&\clim\bra{\psi_n} A\ket{\psi_n}=\clim\langle A_\disc\rangle
    =\clim\bra{\psi_\disc} A_\disc\ket{\psi_\disc}=\bra\psi A\ket\psi.
  \end{align*}
\end{theorem}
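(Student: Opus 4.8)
The plan is to prove Theorem~\ref{thm:continuum-limit} by reducing everything to the behaviour of the $\zeta$-functions $\zeta(U\Gf A)$ and $\zeta(U\Gf)$ near $z=0$ and exploiting the fact that, on a finite-dimensional subspace, traces are literal finite sums, so that the meromorphic extension is superfluous. First I would fix $n\in\nn$ and write the discretized objects: $\psi_n$ is the discrete vacuum, $A_\disc$ the compression of $A$, and $U\Gf(z)$ restricted to the $n$-th subspace is an honest trace-class (indeed finite-rank) operator for \emph{all} $z\in\cn$, so that $z\mapsto\tr(U_n\Gf_n(z)A_n)$ and $z\mapsto\tr(U_n\Gf_n(z))$ are entire and their ratio is meromorphic with the usual $T\to\infty$ projection onto the vacuum giving $\bra{\psi_n}A\ket{\psi_n}$ at $z=0$ (this is the discrete analogue of the Creutz--Freedman formula quoted in the introduction, and is where Proposition~\ref{prop:continuum-limit} is invoked to identify the $T\to\infty$ limit with the vacuum projection at the discrete level). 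The point is that at the discrete level the value at $z=0$ requires no regularization: $\Gf_n(0)=1$ forces $\langle A_\disc\rangle=\bra{\psi_n}A\ket{\psi_n}$ directly.

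Next I would set up the continuum limit as a limit of meromorphic functions. Define $f_n(z):=\zeta(U_n\Gf_n(z)A_n)$ and $g_n(z):=\zeta(U_n\Gf_n(z))$; by the local-boundedness hypothesis on $(z\mapsto\norm{\Gf(z)A\psi_n})_n$ and $(z\mapsto\norm{\Gf(z)\psi_n})_n$ together with the trace-norm estimates available from the Fourier integral operator $\zeta$-function machinery of reference~\cite{hartung}, the families $(f_n)_n$ and $(g_n)_n$ are locally uniformly bounded on $\cn$ (away from the poles of the limiting $\zeta$-functions). Montel's theorem then gives a subsequence converging locally uniformly to holomorphic limits; but more is true: because $f_n\to\zeta(U\Gf A)$ and $g_n\to\zeta(U\Gf)$ pointwise on the half-plane $\cn_{\Re\cdot<R}$ where all the traces are classical (this is the content of the discretization scheme $\disc$ converging to the continuum operators in the relevant operator ideal — the actual input from section~\ref{sec:disc}), the whole sequences converge locally uniformly on $\cn$ to these meromorphic functions. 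Hence $\clim\langle A_\disc\rangle=\clim\frac{f_n}{g_n}\big|_{z=0}=\frac{\zeta(U\Gf A)}{\zeta(U\Gf)}\big|_{z=0}$, and taking $T\to\infty$ identifies this with $\langle A\rangle_\Gf(0)$, while taking $T\to\infty$ first on the discrete side and then $n\to\infty$ gives the chain $\clim\bra{\psi_n}A\ket{\psi_n}=\clim\langle A_\disc\rangle=\clim\bra{\psi_\disc}A_\disc\ket{\psi_\disc}=\bra\psi A\ket\psi$ by continuity of the vacuum states (again via Proposition~\ref{prop:continuum-limit}).

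The main obstacle is the interchange of the three limits $T\to\infty$, $n\to\infty$, and $z\to0$. Pointwise convergence $f_n\to\zeta(U\Gf A)$ only on $\cn_{\Re\cdot<R}$ does not automatically propagate to $z=0$ unless one controls the family uniformly on a neighbourhood of $0$ that straddles the half-plane boundary; this is exactly what the local-boundedness hypothesis buys, but one must check that $z=0$ is not a pole of $g=\zeta(U\Gf)$ — or, if it is, that the pole is cancelled in the ratio uniformly in $n$, which is the delicate part and presumably why the theorem is stated with ``the assumptions of Proposition~\ref{prop:continuum-limit}''. Concretely, I expect the proof to estimate $\abs{g_n(z)}$ from below near $0$ uniformly in $n$, using that $\Gf_n(0)=1$ and $U_n$ is unitary so $\tr(U_n)$ is the partition function of the discrete system, whose nonvanishing near $z=0$ after the $T\to\infty$ normalization follows from the spectral gap/ground-state nondegeneracy implicit in the setup; once the denominator is bounded away from zero, Vitali's convergence theorem upgrades the half-plane convergence to all of $\cn$ minus poles, and evaluation at $0$ is then justified. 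The remaining identities are then bookkeeping: $\bra{\psi_\disc}A_\disc\ket{\psi_\disc}=\langle A_\disc\rangle$ by definition of the discretized expectation value, and $\bra\psi A\ket\psi=\langle A\rangle_\Gf(0)$ is the original regularized definition, so the theorem closes the loop.
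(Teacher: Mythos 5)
Your proposal is correct and follows essentially the same route as the paper: Proposition~\ref{prop:continuum-limit} handles the chain $\bra\psi A\ket\psi=\clim\bra{\psi_n}A\ket{\psi_n}$, the discretized regularized expectation is rewritten as the quotient $\frac{\langle\psi_n,\Gf(z)A\psi_n\rangle_\Hp}{\langle\psi_n,\Gf(z)\psi_n\rangle_\Hp}$, and Vitali's theorem (via the local-boundedness hypothesis plus pointwise convergence on the half-plane $\Re(z)<R$ where the traces are classical) yields compact convergence to $\langle A\rangle_\Gf$, with the denominator controlled near $z=0$ exactly as you anticipate through $\Gf(0)=1$ giving $\langle\psi,\Gf(0)\psi\rangle_\Hp=1$. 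The only cosmetic difference is that the paper works with the partition-function-normalized quantities $\langle\psi_n,\Gf(z)A\psi_n\rangle_\Hp$ rather than the raw traces $\tr(U_n\Gf_n(z)A_n)$, so that only the ratio — not the individual $\zeta$-functions — needs to converge.
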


From a physical point of view, $\bra\psi A\ket\psi$ is the quantity we would like to compute but we do not have access to it since, in general, we don't know the vacuum $\psi$. On the other hand, $\langle A\rangle_\Gf(0)$ is a mathematically well-defined object applying $\zeta$-regularization to Feynman's path integral. A priori, there is no reason for these two quantities to be related since we have been changing the path integral definition on a very fundamental level. Nonetheless, Theorem~\ref{thm:continuum-limit} states that the two have to coincide, i.e., that physical vacuum expectation values arise as $\zeta$-regularized vacuum expectation values.

Furthermore - and central to proving this statement - both $\bra\psi A\ket\psi$ and $\langle A\rangle_\Gf(0)$ can be expressed as the same continuum limit of $\langle A_\disc\rangle:=\bra{\psi_\disc} A_\disc\ket{\psi_\disc}$ which is the numerical problem of computing $\bra\psi A\ket\psi$ in a certain discretization scheme $\disc$ (further discussed below and in full detail in Section~\ref{sec:disc}). This discretized vacuum expectation can alternatively be stated as $\langle A_\disc\rangle=\bra{\psi_n} A\ket{\psi_n}$ where $\psi_n$ is a discretized approximation to the vacuum $\psi$. Most importantly, $\psi_n$ is accessible on a Quantum Processing Unit for which we choose the Rigetti 8Q chip ``Agave'' in this paper.\cite{rubin} In terms of qubits, $n=2^{\text{number of qubits}}$ and the continuum limit is $n\to\infty$.

However, the most remarkable observations are that, firstly, all computations ($\zeta$~and discretized) are non-perturbative and performed in Minkowski space allowing for real time computations and, secondly, the $\zeta$-computation is in the continuum. This, in combination with quantum computing, can therefore lead to completely new avenues for quantum field theory calculations. Note also that the here described procedure is much more general than standard a lattice theory formulation on a Euclidean space-time grid.

In more mathematical terms, we take the point of view that a discretization scheme is a restriction of a problem posed in a separable Hilbert space $\Hp$ to finite dimensional subspaces. In other words, a discretization scheme is a sequence of projections $(P_n)_{n\in\nn}$ on $\Hp$ such that $\fa n\in\nn:\ \dim(P_n[\Hp])=n$. The discretization scheme $\disc$ will furthermore assume that these projections are nested in the sense $\fa n\in\nn:\ P_n[\Hp]\sse P_{n+1}[\Hp]$. Hence, increasing $n$ can be seen as refinement of the discretization. Such a discretization scheme can, for instance, be constructed using an orthonormal basis $(e_j)_{j\in\nn_0}$ and defining
\begin{align*}
  \fa n\in\nn:\ P_n[\Hp]:=\lin\{e_j;\ 0\le j< n\}.
\end{align*}
Such a construction ensures density of $\bigcup_{n\in\nn}P_n[\Hp]$ in $\Hp$ and makes it fairly easy to compute the matrix $M$ describing the discretization of an operator $A$ on $P_n[\Hp]$. More precisely, 
\begin{align*}
  \fa 0\le j,k<n:\ M_{jk}=\langle e_j,Ae_k\rangle_{\Hp}.
\end{align*}
Given the Hamiltonian $H$ of the system in $\Hp$, the vacuum state $\psi$ is defined to be a normalized minimizer of $x\mapsto\langle x,Hx\rangle_{\Hp}$. Thus, we can obtain approximate vacuum states $\psi_n$ on $P_n[\Hp]$ by minimizing $x\mapsto\langle x,Hx\rangle_\Hp$ over all normalized elements of $P_n[\Hp]$. It should be noted that a priori $\psi_n$ need not be $P_n\psi$, i.e., we do not get $\psi_n\to\psi$ for free. However, since $P_n[\Hp]\sse P_{n+1}[\Hp]$ and each $\psi_n$ minimizes $x\mapsto\langle x,Hx\rangle_\Hp$, we do know that the sequence $(\langle\psi_n,H\psi_n\rangle_\Hp)_{n\in\nn}$ is non-increasing which under additional assumptions (namely those of Proposition~\ref{prop:continuum-limit}) will yield $\psi_n\to\psi$ as well as $\langle\psi_n,A\psi_n\rangle_\Hp\to\langle\psi,A\psi\rangle_\Hp$ for observables $A$. In this sense, the continuum limits in Theorem~\ref{thm:continuum-limit} are to be understood as limits $n\to\infty$.

The remaining assumptions in Theorem~\ref{thm:continuum-limit} for $A$ and $\Gf$ to be such that the sequences $(z\mapsto\norm{\Gf(z)A\psi_n})_{n\in\nn}$ and $(z\mapsto\norm{\Gf(z)\psi_n})_{n\in\nn}$ are locally bounded in $C(\cn)$, are necessary to prove that the limit is precisely the $\zeta$-regularized vacuum expectation. This is due to the fact that - in the proof - we express $\langle A\rangle_\Gf$ as the quotient $\frac{\langle\psi,\Gf(\cdot)A\psi\rangle_\Hp}{\langle\psi,\Gf(\cdot)\psi\rangle_\Hp}$ and then use the discretization scheme on both numerator and denominator separately. Hence, pointwise boundedness of $(z\mapsto\norm{\Gf(z)A\psi_n})_{n\in\nn}$ and $(z\mapsto\norm{\Gf(z)\psi_n})_{n\in\nn}$ is simply one of the assumptions in Proposition~\ref{prop:continuum-limit}. In particular, for $z=0$ they are numerically necessary as otherwise the variance of the observable $A$ is unbounded making any numerical approach to compute the limit unfeasible. However, pointwise boundedness is not quite sufficient for the proof since a pointwise convergent sequence of holomorphic functions might not have a holomorphic limit. On the other hand, we need the limit $\lim_{n\to\infty}\frac{\langle\psi_n,\Gf(\cdot)A\psi_n\rangle_\Hp}{\langle\psi_n,\Gf(\cdot)\psi_n\rangle_\Hp}$ to be meromorphic if we want to conclude that it coincides with $\langle A\rangle_\Gf$.

In order to obtain a more thorough understanding of how the discretization scheme and $\zeta$-regularization work, we will consider two examples before proving Theorem~\ref{thm:continuum-limit} in sections~\ref{sec:disc} and~\ref{sec:proof}. First, we will consider the free radiation field (Section~\ref{sec:radiation}) as an introductory example of a QFT with easy to compute vacuum energy. We will explicitly construct the Hamiltonian $H$ and compute its vacuum energy using both methods, i.e., $\bra0H\ket0$ and $\langle H\rangle_\Gf(0)$. The second example (Section~\ref{sec:convergence}) will be the $1$-dimensional hydrogen atom on $(-\pi,\pi)$ and focus on the the convergence rate in the discretization scheme. This example is chosen in such a way, that the vacuum state is highly non-trivial but the discretizations are numerically easy to handle. In particular, we will use the Rigetti Quantum Virtual Machine and Rigetti 8Q chip ``Agave'' to show that the discretization scheme using the standard Fourier basis converges exponentially fast in the number of qubits and that such a convergence rate can be realized on a quantum computer within the limitations of its fidelity. Thus quantum computation can be a powerful tool to compute vacuum expectation values of observables in Minkowski background even utilizing a small number of qubits.

\section{The free radiation field}\label{sec:radiation}
Before presenting the proof of Theorem~\ref{thm:continuum-limit} we would like to showcase the $\zeta$-regularization part applied to a QFT. This section does not contain the discretization aspect of Theorem~\ref{thm:continuum-limit} but is intended to give a non-trivial example of the continuum computation using the $\zeta$-function definition of vacuum expectation values. In particular, we want to consider a gauge theory which a priori is highly non-trivial to discuss in this context. Further examples including scalar fields and the Dirac field have been reported previously.~\cite{hartung,hartung-iwota,hartung-jansen}

As an example, we will consider the free radiation field, i.e., QED without coupling to matter, on the spatial torus $(\rn/X\zn)^3$. The quantization of the free radiation field using the Gupta Bleuler formalism is well-known in high energy physics. However, in order to make the paper self-contained, we have recapitulated the construction in Appendix~\ref{app:radiation-construction}.

Following the construction in Appendix~\ref{app:radiation-construction}, the Hilbert space $\Hp_{\le N}$ of up to $N$ photons is given by the symmetric tensor product
\begin{align*}
  \Hp_{\le N}=\bigoplus_{n=1}^NS\bigotimes_{j=1}^n\l(L_2\l((\rn/X\zn)^3,\cn^2\r)\ominus\lin\l\{
  \begin{pmatrix}
    0\\1
  \end{pmatrix}
  \r\}\r)\sse L_2\l((\rn/X\zn)^{3N},\cn^2\r)
\end{align*}
and the Hamiltonian $H_N$ is
\begin{align*}
  H_N=\sum_{k=1}^N\l(\bigotimes_{m=1}^{k-1}\id_{L_2\l((\rn/X\zn)^{3},\cn^2\r)}\r)\otimes \abs{\nabla_{L_2\l((\rn/X\zn)^{3},\cn^2\r)}}\id_{\cn^2}\otimes\l(\bigotimes_{m=k+1}^{N}\id_{L_2\l((\rn/X\zn)^{3},\cn^2\r)}\r)
\end{align*}
which extends the example $\abs\d$ on $\rn/2\pi\zn$ as discussed in the introduction.

Since $\abs\nabla$ is a pseudo-differential operator with symbol $\sigma_{\abs\nabla}:\ \rn^3\to\rn;\ \xi\mapsto \norm\xi_{\ell_2(3)}$, we obtain that $H_N$ is pseudo-differential with symbol $\sigma_{H_N}:\ (\rn^3)^{N}\to\rn;\ \xi\mapsto\sum_{n=1}^N\norm{\xi_n}_{\ell_2(3)}\id_{\cn^2}$ and, since $H_N$ is time independent and its kernel independent of the spatial variable, we conclude that the up-to-$N$-photon wave propagator $U_N$ is given by $U_N=e^{-iTH_N}$ and has symbol $\sigma_{U_N}:\ (\rn^3)^{N}\to\rn;\ \xi\mapsto e^{-iT\sum_{n=1}^N\norm{\xi_n}_{\ell_2(3)}}\id_{\cn^2}$. Hence, choosing the gauge $\Gf(z):=H_N^z$, we obtain
\begin{align*}
  \langle H_N\rangle_\Gf(z)=&\lim_{T\to\infty}\frac{\int_{(\rn^3)^N}\tr\l(e^{-iT\sum_{n=1}^N\norm{\xi_n}_{\ell_2(3)}}\sum_{n=1}^N\norm{\xi_n}_{\ell_2(3)}\prod_{m=1}^N\norm{\xi_m}_{\ell_2(3)}^{z}\id_{\cn^2}\r)d\xi}{\int_{(\rn^3)^N}\tr\l(e^{-iT\sum_{n=1}^N\norm{\xi_n}_{\ell_2(3)}}\prod_{m=1}^N\norm{\xi_m}_{\ell_2(3)}^{z}\id_{\cn^2}\r)d\xi}\\
  =&\lim_{T\to\infty}\frac{\sum_{n=1}^N\prod_{m=1}^N\int_{\rn^3}e^{-iT\norm{\xi_m}_{\ell_2(3)}}\norm{\xi_m}_{\ell_2(3)}^{z+\delta_{mn}}d\xi_m}{\prod_{m=1}^N\int_{\rn^3}e^{-iT\norm{\xi_m}_{\ell_2(3)}}\norm{\xi_m}_{\ell_2(3)}^{z}d\xi_m}\\
  =&\lim_{T\to\infty}\frac{N\int_{\rn_{>0}}e^{-iTr}r^{z+3}dr}{\int_{\rn_{>0}}e^{-iTr}r^{z+2}dr}\\
  =&\lim_{T\to\infty}\ubr{\frac{N\Gamma(z+4)(iT)^{-z-4}}{\Gamma(z+3)(iT)^{-z-3}}}_{\propto\frac1T}\\
  =&0.
\end{align*}
These integrals are well-defined in terms of homogeneous distributions whenever $U_NH_N^{z+1}$ is of trace-class which is satisfied for $\Re(z)<-4$.

\begin{remark*}
  Let $\theta$ be a phase function in an open cone $\Gamma\sse X\times \rn^n$, $F\sse \Gamma\cup(X\times\{0\})$ a closed cone, and $a$ in the symbol class $S^\mu_{\rho,\delta}(X\times\rn^n)$ with support in $F$ for some $\mu\in\rn$, $\rho\in(0,1]$, and $\delta\in[0,1)$. Then, $I:\ C_c^\infty(X)\to\cn;\ u\mapsto\int_{X\times\rn^n}e^{i\theta(x,\xi)}a(x,\xi)u(x)d(x,\xi)$ defines a distribution of order $\le k$ whenever $\max\{\mu-k\rho,\mu-k(1-\delta)\}<-n$ (Theorem~7.8.2 in$\ $\cite{hoermander}).

  Hence, in the case of $I:=\int_{\rn^3}e^{-iT\norm{\xi_m}_{\ell_2(3)}}\norm{\xi_m}_{\ell_2(3)}^{z+\delta_{mn}}d\xi_m$, we observe $\rho=1$, $\delta=0$, $\theta(x,\xi_m)=T\norm{\xi_m}_{\ell_2(3)}$, and $a(x,\xi_m)=\norm{\xi_m}_{\ell_2(3)}^{z+\delta_{mn}}$ is a classical symbol with $\mu=\Re(z)+\delta_{m,n}$. In other words, $I$ is a well-defined distribution if $\Re(z)+1<-3$.
\end{remark*}

By construction of the $\zeta$-regularization scheme, the vacuum expectation is then defined via analytic continuation of $\langle H_N\rangle_\Gf(z)$ to $0$, i.e., $\langle H_N\rangle_\Gf(0)=0$. Finally, we can capture all physically reachable states taking the (now trivial) limit $N\to\infty$ and observe
\begin{align*}
  \langle H\rangle=\bra0H\ket0=0=\langle H\rangle_\Gf(0)=\lim_{N\to\infty}\langle H_N\rangle_\Gf(0).
\end{align*}

\begin{remark*}
  The full Fock space can be treated in a similar manner to the computation above. An in-depth discussion of the Fock space case is included in Appendix~\ref{app:fockspace}.
\end{remark*}

The convergence with respect to $\disc$ is not interesting in this case because most reasonable choices of basis vectors naturally contain the vacuum. Hence, choosing any enumeration $(e_n)_{n\in\nn_0}$ of basis vectors will imply that the vacuum $\ket0$ is contained in all $P_n[\Hp]$ with $n$ sufficiently large. This, however, trivializes the limit. We will therefore discuss a non-trivial example from the discretization point of view in Section~\ref{sec:convergence}.

\section{Convergence in \disc}\label{sec:convergence}
Having discussed the $\zeta$-regularization half of Theorem~\ref{thm:continuum-limit}, we want to have a look at the rate of convergence using a non-trivial example. This example illustrates that the discretization scheme can be implemented on a quantum device and, as we will see, convergence rates can be exponential in the number of qubits. This gives access to the discretized vacuum state (minimizing the energy) and hence vacuum expectations can be computed directly using the vacuum state. This can be generalized to include the time evolution and evaluate the path integral. However, this is beyond the scope of this example.

The main advantage of our discretization scheme is the proof of convergence. Thus any discretization scheme used in practice will converge to the physically correct vacuum expectation value provided it satisfies the assumptions of Theorem~\ref{thm:continuum-limit}. However, it is not clear from the construction which rates of convergence to expect. In order to go beyond a mere convergence proof result, we want to explicitly demonstrate that numerically viable convergence rates can be obtained. The application is therefore not intended to devise novel quantum algorithms nor is the implementation competitive with state-of-the-art quantum algorithms. This is based on the fact that the $\zeta$-regularization serves only to provide a rigorous definition of the continuum limit. The construction of the discretization scheme itself is an independent procedure whose limit coincides with the $\zeta$-regularized vacuum expectation value as long as the assumptions of Theorem~\ref{thm:continuum-limit} are satisfied. Whether having an analytically well-defined expression for the continuum limit may lead to novel algorithms or a priori estimates on the convergence is an interesting but independent question.

The example we would like to consider is a $1$-dimensional version of the hydrogen atom in $(-\pi,\pi)$ with the proton sitting in $0$. In other words, the Hamiltonian is given by $H=-\frac{\d^2}{2m}+qU(x)$ where $U$ is the Coulomb potential. However, we will not consider the $3$-dimensional Coulomb potential $U_3(x)=\frac{-1}{\norm x_{\ell_2(3)}}$ since that would be a very severe term (note that $U_3$ is integrable over $(-\pi,\pi)^3$ but not over $(-\pi,\pi)$). Instead we choose the $N$-dimensional Coulomb potential $U_N$ to be the Green's function of the Laplacian on $(-\pi,\pi)^N$. In other words, $U_N(x)=\frac{-1}{\norm x_{\ell_2(N)}^{N-2}}$ for $N\ge3$ and $U_2(x)=\ln\norm x_{\ell_2(2)}$ are well-known, and 
\begin{align*}
  U_1(x)=x\cdot1_{(0,\pi)}(x)=
  \begin{cases}
    x&;\ x\in\l(0,\pi\r)\\
    0&;\ x\in\l(-\pi,0\r]
  \end{cases}
\end{align*}
can be easily verified, since for any $\phi\in C_c^\infty((-\pi,\pi))$ we observe
\begin{align*}
  \int_{-\pi}^\pi (x-y)1_{(0,\pi)}(x-y)\phi''(y)dy=&\int_{-\pi}^x (x-y)\phi''(y)dy
  =\int_{-\pi}^x \phi'(y)dy
  =\phi(x).
\end{align*}
The Hamiltonian $H:\ W_2^2((-\pi,\pi))\sse L_2((-\pi,\pi))\to L_2((-\pi,\pi))$ is then defined to be the second order differential operator\footnote{We note that $H$ is the sum of the unbounded self-adjoint operator $-\frac{\d^2}{2m}$ and the bounded self-adjoint operator $qx1_{(0,\pi)}(x)$. Since $(A+B)^*=A^*+B^*$ whenever at least one of the operators $A$ and $B$ is bounded, we obtain that $H$ is self-adjoint.}
\begin{align*}
  -\frac{\d^2}{2m}+qx1_{(0,\pi)}(x).
\end{align*}
The Hilbert space $\Hp$ is $L_2((-\pi,\pi))$ and the basis of our choice is $\phi_k(x):=\frac{1}{\sqrt{2\pi}}e^{ikx}$ for $k\in\zn$. We will order them as $e_0:=\phi_0$, $e_{2j-1}:=\phi_{-j}$, and $e_{2j}:=\phi_{j}$, i.e., the finite dimensional subspaces the discretization is defined on is given by
\begin{align*}
  \fa n:\ P_n[\Hp]=\lin\l\{\phi_k;\ -\l\lfloor\frac{N}{2}\r\rfloor\le k\le N-1-\l\lfloor\frac{N}{2}\r\rfloor\r\}.
\end{align*}
The matrix elements of the Hamiltonian are then given by
\begin{align*}
  \langle\phi_k,H\phi_k\rangle=&\int_{-\pi}^\pi\frac{k^2}{2m}\frac{1}{2\pi}dx+\int_0^\pi qx\frac{1}{2\pi}dx=\frac{k^2}{2m}+\frac{q\pi}{4}
\end{align*}
and for $k\ne l$
\begin{align*}
  \langle\phi_l,H\phi_k\rangle=&\int_{-\pi}^\pi\frac{k^2}{2m}\frac{e^{i(k-l)x}}{2\pi}dx+\int_0^\pi qx\frac{e^{i(k-l)x}}{2\pi}dx\\
  =&\frac{q\l((-1)^{k-l}(1-i\pi(k-l))-1\r)}{2\pi(k-l)^2}.
\end{align*}
In Figure~\ref{fig:convergence-dimension} we can see the relative truncation error of $\bra{\psi_n}H\ket{\psi_n}$ as a function of $n$. Furthermore, we have added the graph of $n\mapsto 4.85\cdot10^{-7}\cdot e^{-.00644n}$ which indicates exponential convergence of $\l(\bra{\psi_n}H\ket{\psi_n}\r)_{n\in\nn}$.
\begin{figure}
  \includegraphics[scale=.35]{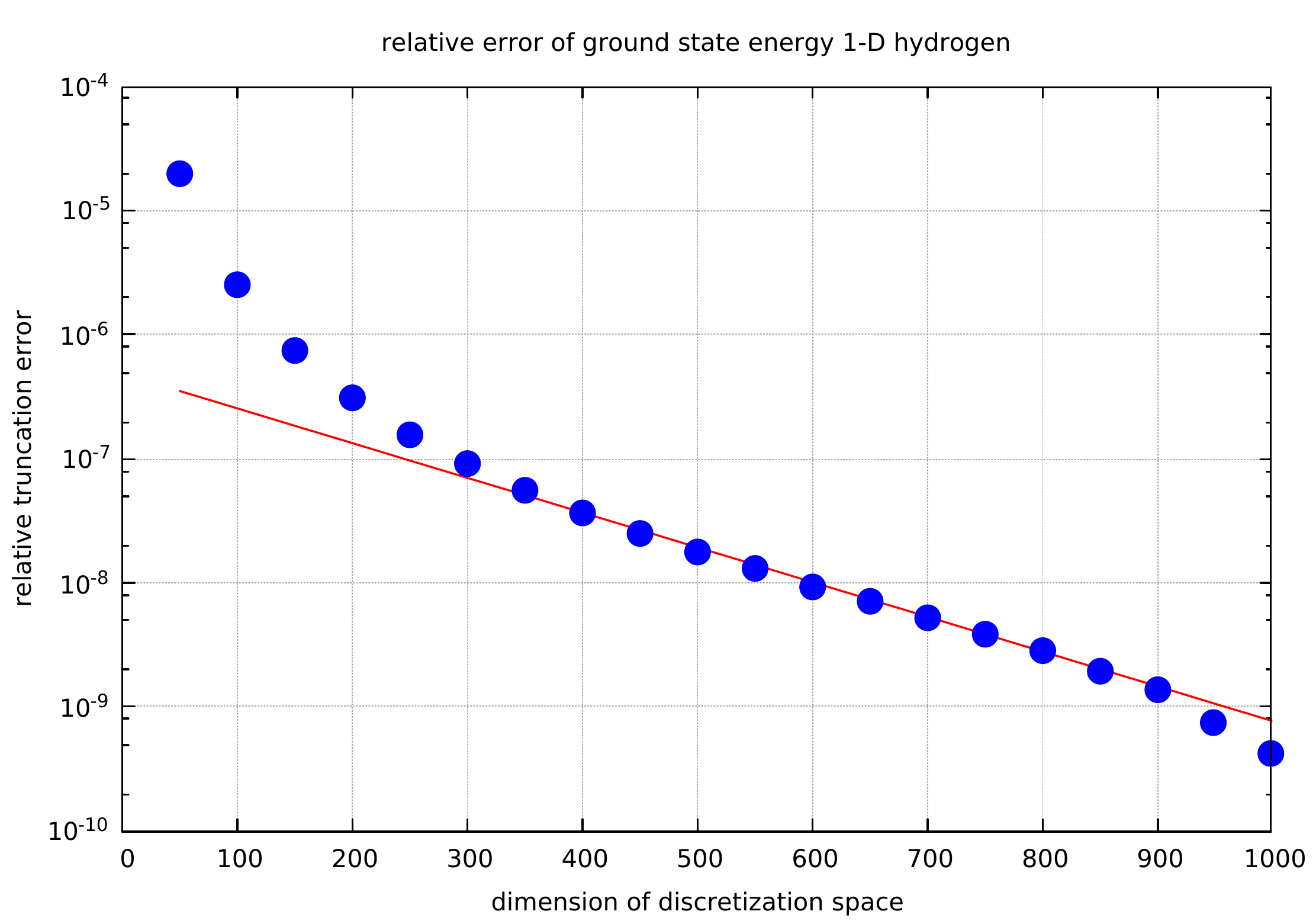}
  \caption{\label{fig:convergence-dimension}This figure shows the relative truncation error of $\bra{\psi_n}H\ket{\psi_n}$ computed for dimension of Hilbert space $n\in\{50,100,\ldots,1000\}$ and compared to $\bra{\psi_{1050}}H\ket{\psi_{1050}}$ as well as the graph of $n\mapsto 4.85\cdot10^{-7}\cdot e^{-.00644n}$ which indicates exponential convergence. We have chosen $m=q=1$ for the mass of the electron $m$ and the electric coupling constant $q$. To compute $\bra{\psi_n}H\ket{\psi_n}$, we have computed the smallest eigenvalue of $H$ restricted to $P_n[\Hp]$.}
\end{figure}

Since the dimension of the Hilbert space of a $q$-qubit quantum computer grows exponentially (more precisely, $n=2^q$), this becomes even more interesting if we consider an implementation on a quantum computer. In that case, Figure~\ref{fig:convergence-qubits} shows the dependence of the relative truncation error of $\bra{\psi_n}H\ket{\psi_n}$ with respect to the number of qubits $q$. It shows that the convergence of $\l(\bra{\psi_{2^q}}H\ket{\psi_{2^q}}\r)_{q\in\nn}$ is comparable to $q\mapsto 1.14\cdot e^{-1.92q}$.
\begin{figure}
  \includegraphics[scale=.35]{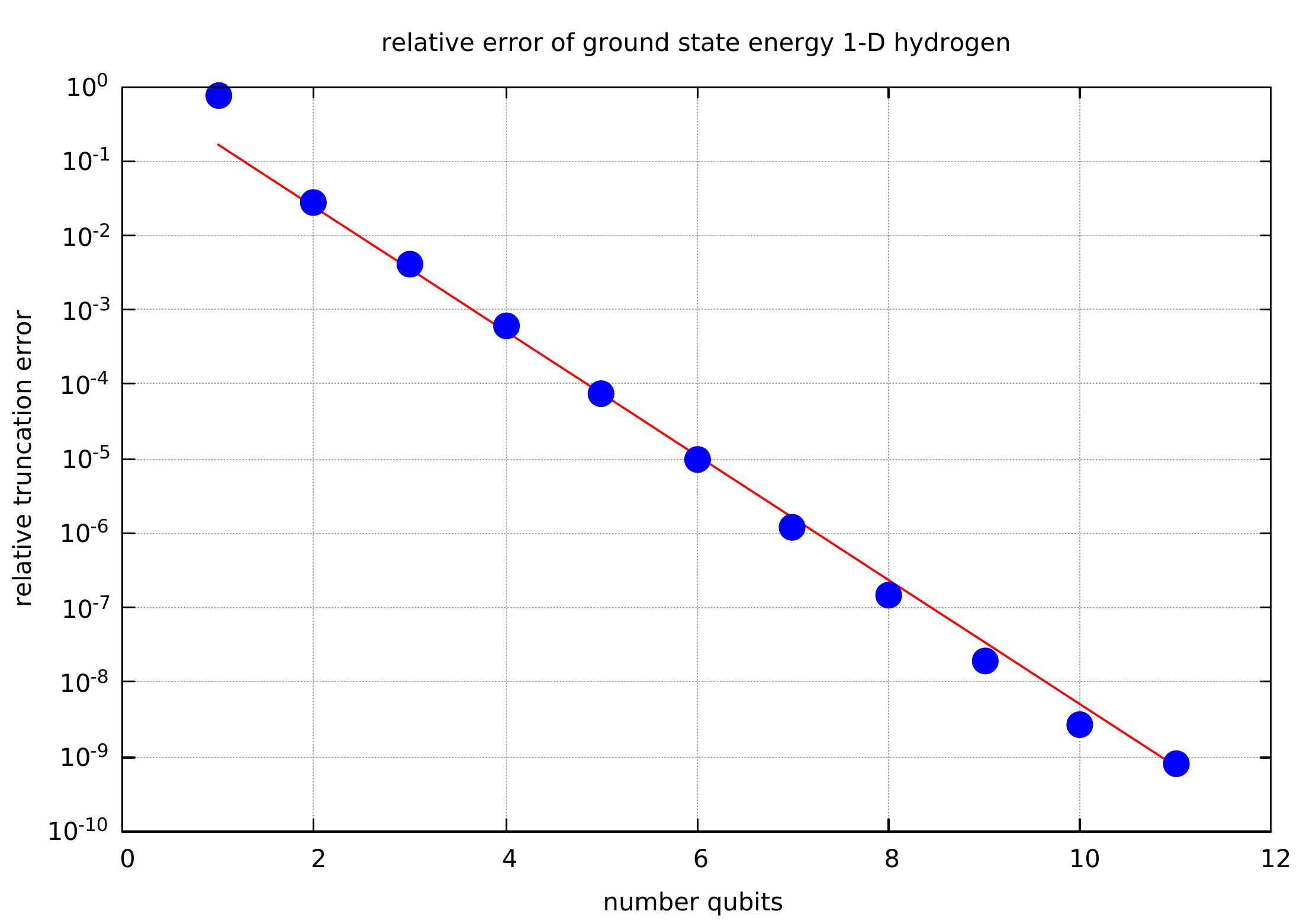}
  \caption{\label{fig:convergence-qubits}This figure shows the relative truncation error of $\bra{\psi_n}H\ket{\psi_n}$ computed for $\mathrm{number\ qubits}\in\{1,2,\ldots,11\}$ and compared to $\bra{\psi_n}H\ket{\psi_n}$ computed with $12$ qubits as well as the graph of $q\mapsto 1.14\cdot e^{-1.92q}$ which indicates exponential convergence. We have chosen $m=q=1$ for the mass of the electron $m$ and the electric coupling constant $q$. To compute $\bra{\psi_n}H\ket{\psi_n}$, we have computed the smallest eigenvalue of $H$ restricted to $P_n[\Hp]$.}
\end{figure}

In order to perform this computation on a quantum computer, we need to map the $Q$-qubit discretized Hamiltonian $H_Q:=\l(\langle e_j,He_k\rangle_{L_2((-\pi,\pi))}\r)_{0\le j,k< 2^Q}$ onto the Pauli-Basis of $\bigotimes_{q=0}^{Q-1}\cn^2$. Following the convention used by Rigetti~\cite{smith}, we will use the Kronecker product notation for tensor products; that is, for vectors
\begin{align*}
  \fa a\in\cn^m\ \fa b\in\cn^n:\ a\otimes b=
  \begin{pmatrix}
    a_0b\\a_1b\\\vdots\\a_{m-1}b
  \end{pmatrix}\in\cn^{mn}
\end{align*}
and for matrices $\fa A\in\cn^{m,n}\ \fa B\in\cn^{r,s}:\ $
\begin{align*}
  A\otimes B = 
  \begin{pmatrix}
    A_{00}B&A_{01}B&\ldots&A_{0n}B\\
    A_{10}B&A_{11}B&\ldots&A_{1n}B\\
    \vdots&\vdots&\ddots&\vdots\\
    A_{m0}B&A_{m1}B&\ldots&A_{mn}B\\
  \end{pmatrix}\in\cn^{mr,ns}.
\end{align*}
In $\cn^2$, we choose the basis $\ket0=(1,0)^T$ and $\ket1=(0,1)^T$. Then we obtain the basis $(\ket q)_{0\le q<2^Q}$ of $\cn^{2^Q}$ of the $Q$-fold tensor product of $\cn^2$s where $\ket q=(\delta_{j,\hat q})_{0\le j< 2^Q}$ with $\hat q\in\nn_0$ having binary representation $q$. For instance, $\ket{100}=(0,0,0,0,1,0,0,0)^T=(\delta_{j,4})_{0\le j<8}$ and $\ket{001}=(0,1,0,0,0,0,0,0)^T=(\delta_{j,1})_{0\le j<8}$.

The Pauli matrices in $\cn^2$ are given by $\sigma^0=\begin{pmatrix}1&0\\0&1\end{pmatrix}$, $\sigma^1=\begin{pmatrix}0&1\\1&0\end{pmatrix}$, $\sigma^2=\begin{pmatrix}0&-i\\i&0\end{pmatrix}$, and $\sigma^3=\begin{pmatrix}1&0\\0&-1\end{pmatrix}$. Using the Kronecker product, we can construct the Pauli basis of $\cn^{2^Q,2^Q}$ which is given by
\begin{align*}
  \l\{S^q=\sigma^{q_{Q-1}}\otimes\sigma^{q_{Q-2}}\otimes\ldots\otimes\sigma^{q_{0}};\ 0\le q< 4^Q\r\}.
\end{align*}
In terms of matrix elements $S^q_{jk}$, let $0\le j,k< 2^Q$ be decomposed as $j=\sum_{n=0}^{Q-1}j_n2^n$ and $k=\sum_{n=0}^{Q-1}k_n2^n$. Then $S^q_{jk}=\prod_{n=0}^{Q-1}\sigma^{q_n}_{j_nk_n}$. The $S^q$ are orthogonal in $\cn^{2^Q,2^Q}$ with respect to the scalar product $(A,B)\mapsto\tr(AB^*)$ and their norm satisfies
\begin{align*}
  \norm{S^q}_{\cn^{2^Q,2^Q}}^2=\tr(S^qS^q)=\tr(\sigma^0\otimes\sigma^0\otimes\ldots\otimes\sigma^0)=2^Q
\end{align*}
which implies
\begin{align*}
  H_Q=\sum_{0\le q< 4^Q}\frac{\tr\l(H_QS^q\r)}{2^Q}S^q.
\end{align*}

Now we are in a position to use the pyQuil variational quantum eigensolver with conjugate gradients in the classical optimization loop. Since each $H_{Q}$ is a restriction of $H_{Q+1}$, we can compute $\l(\langle\psi_{2^Q},H_Q\psi_{2^Q}\rangle\r)_{Q\in\nn}$ iteratively starting with $Q=1$ which is relatively cheap as it is only a minimization problem in a three dimensional parameter space and then use $\psi_{2^Q}$ as initial guess for the minimization problem $\cn^{2^{Q+1}}\ni\psi\mapsto\langle\psi,H_{Q+1}\psi\rangle\to\min$. For $Q\in\{1,2,3,4,5\}$ we then obtain Table~\ref{table:QVM} on the Rigetti Quantum Virtual Machine ignoring noise terms which shows that an implementation on a quantum computer with sufficiently high fidelity can realize the convergence shown in Figure~\ref{fig:convergence-qubits}. 
\begin{table}
  \begin{center}
    \begin{tabular}{|c|c|c|c|}
      \hline
      $Q$ qubits&min. eig. $H_Q$&$\langle\psi_{2^Q},H_Q\psi_{2^Q}\rangle$&$\langle H_Q\rangle$-min. eig.\\
      \hline 1&.392108816647&.392108816647&0.0\\
      \hline 2&.229395425745&.229395425968&2.22839913189e-10\\
      \hline 3&.224258841712&.224258841747&3.48265860595e-11\\
      \hline 4&.223452200306&.223452200445&1.39043221381e-10\\
      \hline 5&.223336689755&.223336690423&6.67360250395e-10\\
      \hline 
    \end{tabular}
  \end{center}
  \caption{\label{table:QVM} Simulation of the ground state energy computation for the $1$-dimensional hydrogen atom using the Rigetti Quantum Virtual Machine (ignoring noise) and comparison to minimal eigenvalue of $H_Q$. We note that the error terms presented in this table are not with respect to the continuum limit but the errors we obtained at each corresponding point in Figure~\ref{fig:convergence-qubits} after replacing the direct computation of the minimal eigenvalue with the QVM-VQE results. Hence, the simulation on the Rigetti Quantum Virtual Machine reproduces Figure~\ref{fig:convergence-qubits}.}
\end{table}

In fact, direct implementation on the Rigetti 8Q chip ``Agave'' in pyQuil~\cite{quil,rigetti} through the Rigetti Forest API yields comparable results within the limitations of the chip's fidelity. Here, we performed a simple, sequential loop over all parameters of the resource Hamiltonian and thus minimizing the energy one parameter at a time where we looped thrice over the set of all parameters. Using a single qubit, the state preparation requires only five gate operations which keeps gate noise low. ``Agave's'' fidelity for single qubit operations (at the time of execution) was best on qubit $2$ which benchmarked a single-qubit gate fidelity $F_{1Q}=0.982$ and readout fidelity $F_{RO}=0.94$. Performing the computation on ``Agave's'' qubit $2$, we obtained the one qubit ground state energy with a relative mean error of $4.9\%$ at $2.8\%$ standard deviation of the relative mean error. We have repeated the computation on ``Agave's'' qubit $0$ whose fidelity benchmarked at $F_{1Q}=0.956$ and $F_{RO}=0.78$. There, we only reached a relative mean error of approximately $15\%$ which highlights the great impact fidelity has.

For progression to the two qubit computation, we thus chose the two qubits with the best fidelity. Unfortunately, we could not achieve significant results in this setup. Nonetheless, high precision results on many qubits were not the primary goal of our work nor to be expected since this example of a $1$-dimensional hydrogen atom in a bounded universe was chosen to ensure that the vacuum is highly non-trivial from the point of view of the discretization scheme. In particular, the wave function in the continuum is unknown to us and we are only able to obtain comparisons with classical algorithms because the low number of qubits still permits numerical diagonalization of the Hamiltonian. However, scaling the computation to $50$ for instance, direct computation of the minimal eigenvalue and corresponding eigenvector (for a $2^{50}\times2^{50}$ matrix) are practically impossible and Monte-Carlo algorithms are not applicable since we are working in a Minkowski background. Yet the noise-free QVM results and the comparison between ``Agave's'' qubits $0$ and $2$ show that improved quantum processing units will provide numerical access to expectation values and vacuum wave functions in physical simulations of this kind.

\begin{remark*}
  The example discussed in this section is purposely chosen to be very difficult to solve. In general, all $4^{\#\text{ Qubits}}$ Pauli terms will need to be measured in order to evaluate the Hamiltonian. Given this exponential scaling behavior, we did not attempt an optimal implementation on the quantum device nor complete complexity and resource analysis. However, the fact that we still observe exponential convergence is very promising for simpler models in which quantum algorithms with polynomial scaling are known.
\end{remark*}

\section{The discretization scheme $\disc$}\label{sec:disc}
At this point, we want to commence the proof of Theorem~\ref{thm:continuum-limit}. The proof will be in two steps. First, we will discuss the discretization in detail and prove its properties. This will mainly prove the part of Theorem~\ref{thm:continuum-limit} concerning
\begin{align*}
  \bra\psi A\ket\psi=&\clim\bra{\psi_\disc} A_\disc\ket{\psi_\disc}
  =\clim\bra{\psi_n} A\ket{\psi_n}
  =\clim\langle A_\disc\rangle.
\end{align*}
In a second step (Section~\ref{sec:proof}) we will then address the $\zeta$-regularized part of Theorem~\ref{thm:continuum-limit}.

For the discretization, we need to approximate our operators using matrices. In other words, we need to project onto finite dimensional spaces. Since we have a holomorphic family of operators $\Gf$, it is imperative that these projections make sense for every value of $z$. This is possible, since by construction of $\Gf$ all of our operators are well-defined on $W_2^\infty(X)$ (or $W_2^s(X)$ for some $s\in\rn$ provided we introduce an upper bound on $\Re(z)$).

Let $\Hp_0\sse\Hp$ be a dense subspace and $(e_j)_{j\in\nn_0}$ an orthonormal basis of $\Hp$ with $\fa j\in\nn_0:\ e_j\in\Hp_0$. For $n\in\nn$, we define the orthogonal projection $P_n$ onto the $n$-dimensional subspace $P_n[\Hp]=\lin\{e_j;\ 0\le j< n\}\sse\Hp_0$
\begin{align*}
  P_n:\ \Hp\to\Hp;\ \phi\mapsto\sum_{j=0}^{n-1}\langle e_j,\phi\rangle_\Hp e_j.
\end{align*}
For any given upper bound $R_\Re\in\rn_{>0}$ on $\Re(z)$, we choose $\Hp_0$ such that we can find a continuously embedded Hilbert space $\Hp_1$ with $\Hp_0\sse\Hp_1\sse\Hp$ for which all $\Gf(z)$ and $\Gf(z)A$ with $\Re(z)<R_\Re$ are in $L(\Hp_1,\Hp)$. Then, we can discretize $\Gf(z)$ and $\Gf(z)A$ as $P_n\Gf(z)Q_n$, and $P_n\Gf(z)AQ_n$ where $Q_n$ is the orthogonal projection onto $P_n[\Hp]$ in $\Hp_1$ and the discretization of $U$ is given in terms of the discretized Hamiltonian $P_nHQ_n$, that is, $U_\disc=\texp\l(-\frac{i}\hbar\int_0^T P_nH(s)Q_nds\r)$.

\begin{example*}
  For instance, let $\Hp$ be $L_2(X)$ and $A$ a pseudo-differential operator of order $\alpha>0$. Then, $\Gf(z)$ can be constructed to be a pseudo-differential operator of order $z$, we can choose $\Hp_0$ to be contained in $W_2^\infty(X)$, and define $\Hp_1:=W_2^s(X)$ for any $s>\alpha+R_\Re$. Thus, all $\Gf(z)$ and $\Gf(z)A$ are pseudo-differential operators of order $\le\alpha+R_\Re$ for $\Re(z)<R_\Re$ and therefore elements of $L(\Hp_1,\Hp)$.

  More generally, the assumptions on the class of operators~\cite{hartung} imply that we are working in algebras of Fourier integral operators with finite order. This setup always allows for a construction using $\Hp_0\sse W_2^\infty$ and $\Hp_1=W_2^s$. 
\end{example*}

This discretization is viable in the sense of the following lemmas and particularly Proposition~\ref{prop:continuum-limit} which says that the discretized Hamiltonian is still self-adjoint, the ground state energy computed in $\disc$ converges to the ground state energy of the continuum (in $\Hp$), and the vacuum computed in $\disc$ ``converges'' to the vacuum in $\Hp$. Finally, Proposition~\ref{prop:continuum-limit} states that $\bra{\psi_\disc}A_\disc\ket{\psi_\disc}$ converges to the vacuum expectation $\bra\psi A\ket\psi$ given any operator $A\in L(\Hp_1,\Hp)$ for which the vacuum of $\Hp$ is in the domain of $A^*$ where $A^*$ is the adjoint of $A$ as an unbounded operator in $\Hp$ (this is the case for pseudo-differential $A$ because $A^*$ is a pseudo-differential operator of the same order) and such that $\norm{A\psi_\disc}_\Hp=\sqrt{\bra{\psi_\disc}(A^*A)_\disc\ket{\psi_\disc}}$ is bounded. While the last assumption - boundedness of the variance of the discretized operator $A$ in the continuum limit - is technical, it is essentially necessary for numerical applications since otherwise the vacuum expectations $\langle P_nAQ_n\rangle$ in $P_n[\Hp]$ are virtually impossible to compute numerically for large $n$.
\begin{lemma}\label{lemma:pointwise_1}
  Let $A\in L(\Hp_1,\Hp)$, $(e_j')_{j\in\nn_0}$ an orthonormal basis of $\Hp_1$ such that $\fa n\in\nn:\ \lin\{e_j';\ 0\le j<n\}=P_n[\Hp]$, and $Q_n$ the orthogonal projection onto $P_n[\Hp]$ in $\Hp_1$. Then $P_nAQ_n$ converges to $A$ in the strong operator topology, i.e.,
  \begin{align*}
    \fa x\in \Hp_1:\ \lim_{n\to\infty}P_nAQ_nx=Ax.
  \end{align*}
\end{lemma}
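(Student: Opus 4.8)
The plan is to reduce the statement to two elementary facts about orthogonal projections onto increasing finite-dimensional subspaces — applied once in $\Hp$ and once in $\Hp_1$ — and then to combine them via a triangle-inequality splitting that exploits the boundedness of $A$ as a map $\Hp_1\to\Hp$.

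First I would record that $Q_n\to\id$ in the strong operator topology on $\Hp_1$: since $(e_j')_{j\in\nn_0}$ is an orthonormal basis of $\Hp_1$ and $\lin\{e_j';\ 0\le j<n\}=P_n[\Hp]$, the projection $Q_n$ is exactly the partial-sum projector $x\mapsto\sum_{j=0}^{n-1}\langle e_j',x\rangle_{\Hp_1}e_j'$, so $Q_nx\to x$ in $\Hp_1$ for every $x\in\Hp_1$ by Parseval. Analogously, because $(e_j)_{j\in\nn_0}$ is an orthonormal basis of $\Hp$ and $P_n$ is the partial-sum projector onto $\lin\{e_j;\ 0\le j<n\}$, one gets $P_ny\to y$ in $\Hp$ for every $y\in\Hp$; equivalently, $\bigcup_{n\in\nn}P_n[\Hp]$ is dense in $\Hp$ and $\norm{P_n}_{L(\Hp)}\le1$ uniformly.

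Then, for fixed $x\in\Hp_1$, I would split
\begin{align*}
  P_nAQ_nx-Ax=P_n\l(AQ_nx-Ax\r)+\l(P_nAx-Ax\r).
\end{align*}
For the first summand, $\norm{P_n(AQ_nx-Ax)}_\Hp\le\norm{AQ_nx-Ax}_\Hp\le\norm{A}_{L(\Hp_1,\Hp)}\norm{Q_nx-x}_{\Hp_1}\to0$ by the strong convergence $Q_n\to\id$ in $\Hp_1$ together with continuity of $A$. For the second summand, $Ax\in\Hp$ and hence $P_nAx-Ax\to0$ in $\Hp$ by the strong convergence $P_n\to\id$ in $\Hp$. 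Adding the two estimates yields $P_nAQ_nx\to Ax$, which is the claim.

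There is no deep obstacle here; the only point requiring care is bookkeeping between the two Hilbert-space structures — keeping $P_n$ (orthogonal with respect to $\langle\cdot,\cdot\rangle_\Hp$) and $Q_n$ (orthogonal with respect to $\langle\cdot,\cdot\rangle_{\Hp_1}$) apart even though they project onto the same subspace $P_n[\Hp]$, and invoking the boundedness of $A$ precisely as a map $\Hp_1\to\Hp$ rather than within a single space, since $A$ need not be bounded, or even densely defined in a useful way, on $\Hp$ alone. One might additionally note for later use that the very same splitting shows the convergence is uniform on compact subsets of $\Hp_1$, though only the pointwise statement is needed here.
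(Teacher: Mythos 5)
Your proof is correct and is essentially the paper's own argument in disguise: the splitting $P_nAQ_nx-Ax=P_nA(Q_n-1)x+(P_n-1)Ax$ is exactly what the paper obtains by expanding $\norm{Ax-P_nAQ_nx}_\Hp^2$ in the basis $(e_j)$ and separating the coordinates $j<n$ (controlled by $\norm{A}_{L(\Hp_1,\Hp)}\norm{(1-Q_n)x}_{\Hp_1}$) from $j\ge n$ (the tail of $Ax$). No gaps.
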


\begin{proof}
  Let $x\in\Hp_1$. Then
    \begin{align*}
      \norm{Ax-P_nAQ_nx}_\Hp^2
      =&\sum_{j=0}^{n-1}\abs{\l\langle e_j,Ax-P_nAQ_nx\r\rangle_\Hp}^2+\sum_{j\in\nn_{\ge n}}\abs{\l\langle e_j,Ax-P_nAQ_nx\r\rangle_\Hp}^2\\
      =&\sum_{j=0}^{n-1}\abs{\l\langle e_j,A(1-Q_n)x\r\rangle_\Hp}^2
      +\sum_{j\in\nn_{\ge n}}\abs{\l\langle e_j,Ax\r\rangle_\Hp}^2\\
      \le&\norm{A(1-Q_n)x}_\Hp^2+\sum_{j\in\nn_{\ge n}}\abs{\l\langle e_j,Ax\r\rangle_\Hp}^2\\
      \le&\norm A_{L(\Hp_1,\Hp)}^2\ubr{\sum_{j\in\nn_{\ge n}}\abs{\l\langle e_j',x\r\rangle_{\Hp_1}}^2}_{\to0}+\ubr{\sum_{j\in\nn_{\ge n}}\abs{\l\langle e_j,Ax\r\rangle_\Hp}^2}_{\to0}
    \end{align*}
  shows the assertion.
\end{proof}

\begin{lemma}\label{lemma:pointwise_2}
  Let $B\in L(\Hp)$, $A\in L(\Hp_1,\Hp)$, $(e_j')_{j\in\nn_0}$ an orthonormal basis of $\Hp_1$ such that $\fa n\in\nn:\ \lin\{e_j';\ 0\le j<n\}=P_n[\Hp]$, and $Q_n$ the orthogonal projection onto $P_n[\Hp]$ in $\Hp_1$. Then 
  \begin{align*}
    \fa x\in \Hp_1:\ \lim_{n\to\infty}P_nBQ_nP_nAQ_nx=BAx.
  \end{align*}
\end{lemma}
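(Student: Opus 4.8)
The plan is to reduce Lemma~\ref{lemma:pointwise_2} to Lemma~\ref{lemma:pointwise_1} by splitting the difference $P_nBQ_nP_nAQ_nx-BAx$ into a piece controlled by the convergence $P_nAQ_nx\to Ax$ established in Lemma~\ref{lemma:pointwise_1} and a piece controlled by the convergence of $P_nBQ_n'$ on the fixed vector $Ax$, where $Q_n'$ now denotes the orthogonal projection onto $P_n[\Hp]$ inside $\Hp$ itself (which on $\Hp$ is just $P_n$, but one must be careful that $Q_n$ in the statement is the projection inside $\Hp_1$). First I would observe that since $B\in L(\Hp)$, Lemma~\ref{lemma:pointwise_1} applied with $\Hp_1$ replaced by $\Hp$, $A$ replaced by $B$, and the basis $(e_j)$ in place of $(e_j')$ gives $P_nBP_ny\to By$ for every $y\in\Hp$; note $Q_n$ acting on $\Hp$ coincides with $P_n$ here so there is no discrepancy at that level.

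The key algebraic step is the telescoping estimate
\begin{align*}
  \norm{P_nBQ_nP_nAQ_nx-BAx}_\Hp
  \le&\norm{P_nBQ_nP_nAQ_nx-P_nBP_n(P_nAQ_nx)}_\Hp\\
  &+\norm{P_nBP_n(P_nAQ_nx)-P_nBP_n(Ax)}_\Hp
  +\norm{P_nBP_n(Ax)-BAx}_\Hp.
\end{align*}
The third term tends to $0$ by the remark above, since $Ax$ is a fixed element of $\Hp$. The second term is bounded by $\norm{B}_{L(\Hp)}\norm{P_nAQ_nx-Ax}_\Hp$, which tends to $0$ by Lemma~\ref{lemma:pointwise_1}. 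The first term requires care: one needs $Q_nP_n$, restricted to the range of $P_n$, to agree with $P_n$ — and indeed $P_nAQ_nx\in P_n[\Hp]=\lin\{e_j';\ 0\le j<n\}$, on which $Q_n$ acts as the identity, while $P_n$ also acts as the identity on this subspace; so the first term vanishes identically for each $n$. This is where I expect the only real subtlety: making sure the two projections $P_n$ (in $\Hp$) and $Q_n$ (in $\Hp_1$) have the same range $P_n[\Hp]$ by construction, so that composing them on vectors already lying in $P_n[\Hp]$ is harmless, even though as operators on all of $\Hp_1$ they differ.

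With these three observations the proof is essentially immediate: for $x\in\Hp_1$,
\begin{align*}
  \norm{P_nBQ_nP_nAQ_nx-BAx}_\Hp\le\norm{B}_{L(\Hp)}\norm{P_nAQ_nx-Ax}_\Hp+\norm{P_nBP_nAx-BAx}_\Hp\to0,
\end{align*}
using Lemma~\ref{lemma:pointwise_1} twice (once for $A$ on $\Hp_1$, once for $B$ on $\Hp$). The main obstacle, such as it is, is purely bookkeeping about which Hilbert space each projection lives in and verifying that $Q_n|_{P_n[\Hp]}=\id=P_n|_{P_n[\Hp]}$; once that is pinned down the estimate is a routine $\varepsilon/2$ argument with no analytic content beyond the previous lemma.
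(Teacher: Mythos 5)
Your proof is correct and follows essentially the same route as the paper: both arguments hinge on the observation that $Q_n$ (and $P_n$) act as the identity on $P_n[\Hp]$, so that $P_nBQ_nP_nAQ_nx=P_nB(P_nAQ_nx)$, after which everything reduces to Lemma~\ref{lemma:pointwise_1} and boundedness of $B$. The only cosmetic difference is that you dispatch the outer projection by applying Lemma~\ref{lemma:pointwise_1} a second time (with $\Hp_1$ replaced by $\Hp$ and $Q_n=P_n$), whereas the paper handles it directly through the vanishing tail $\sum_{j\ge n}\abs{\langle e_j,BAx\rangle_\Hp}^2\to0$ in a Pythagorean splitting.
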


\begin{proof}
  Let $x\in\Hp_1$, $B_n:=P_nBQ_n$, and $A_n:=P_nAQ_n$. Then 
    \begin{align*}
      \norm{BAx-B_nA_nx}_\Hp^2
      =&\sum_{j=0}^{n-1}\abs{\l\langle e_j,BAx-B_nA_nx\r\rangle_\Hp}^2+\sum_{j\in\nn_{\ge n}}\abs{\l\langle e_j,BAx-B_nA_nx\r\rangle_\Hp}^2\\
      =&\sum_{j=0}^{n-1}\abs{\l\langle e_j,(BA-BP_nAQ_n)x\r\rangle_\Hp}^2+\sum_{j\in\nn_{\ge n}}\abs{\l\langle e_j,BAx\r\rangle_\Hp}^2\\
      \le&\norm{B(A-A_n)x}_\Hp^2+\sum_{j\in\nn_{\ge n}}\abs{\l\langle e_j,BAx\r\rangle_\Hp}^2\\
      \le&\norm B_{L(\Hp)}^2\ubr{\norm{(A-A_n)x}_\Hp^2}_{\to0}+\ubr{\sum_{j\in\nn_{\ge n}}\abs{\l\langle e_j,BAx\r\rangle_\Hp}^2}_{\to0}
    \end{align*}
  shows the assertion.
\end{proof}

\begin{proposition}\label{prop:continuum-limit}
  Let $H$ be the Hamiltonian, i.e., $\fa s\in[0,T]:\ H(s)$ is a self-adjoint operator, each $-H(s)$ generates a $C_0$-semigroup, and there exists $E_0:=\min\sigma(H(s))$ such that $E_0$ is in the point spectrum, $\ker (H(s)-E_0)$ is independent of $s$, and $\ex\eps\in\rn_{>0}\ \fa s\in[0,T]:\ B(E_0,\eps)\cap\sigma(H(s))=\{E_0\}$ (in other words, the QFT has an energy gap). Then the following are true.
  \begin{enumerate}
  \item[(i)] Let $A$ be self-adjoint in $\Hp$. Then $P_nAQ_n$ is self-adjoint on $(P_n[\Hp],\langle\cdot,\cdot\rangle_\Hp)$. In particular, $P_nH(s)Q_n$ is self-adjoint.

  \item[(ii)] Let $\psi$ be the vacuum state (i.e., an eigenvector with respect to $E_0$) and $\psi_n$ the vacuum state of $P_nH(s)Q_n$ in $P_n[\Hp]$. Then
    \begin{align*}
      \lim_{n\to\infty}\langle\psi_n,P_nH(s)Q_n\psi_n\rangle_\Hp=&\lim_{n\to\infty}\langle\psi,P_nH(s)Q_n\psi\rangle_\Hp
      =\langle\psi,H(s)\psi\rangle_\Hp=E_0.
    \end{align*}
  \item[(iii)] Let $\hat H:=H(s)-E_0$. If the vacuum is non-degenerate, i.e., $\ker\hat H=\lin\{\psi\}$, then $\langle\psi_n,\psi\rangle_{\Hp}\psi_n\to\psi$ in $\Hp$.

  \item[(iv)] Let the vacuum be non-degenerate and $A\in L(\Hp_1,\Hp)$ be such that the sequence $(\norm{A\psi_n}_\Hp)_{n\in\nn}$ is bounded and $\psi\in D(A^*)$ where $A^*$ is the adjoint of $A$ as an unbounded operator in $\Hp$. Then 
    \begin{align*}
      \langle\psi_n,A\psi_n\rangle_{\Hp}=\langle\psi_n,P_nAQ_n\psi_n\rangle_\Hp\to\langle\psi,A\psi\rangle_\Hp
    \end{align*}
    and
    \begin{align*}
      \langle\psi,P_nAQ_n\psi\rangle_\Hp\to\langle\psi,A\psi\rangle_\Hp.
    \end{align*}
  \end{enumerate}
\end{proposition}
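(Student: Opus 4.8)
The plan is to prove the four items in order, using Lemmas~\ref{lemma:pointwise_1} and~\ref{lemma:pointwise_2} as the main analytic engine and exploiting the gap hypothesis throughout. For (i), I would simply unwind the definitions: for $x,y\in P_n[\Hp]$ we have $\langle x, P_nAQ_ny\rangle_\Hp = \langle P_nx, AQ_ny\rangle_\Hp = \langle x, Ay\rangle_\Hp$ since $P_n$ acts as the identity on $P_n[\Hp]$ in the $\Hp$-inner product and $Q_n$ fixes $P_n[\Hp]$; then self-adjointness of $A$ (as a possibly unbounded operator, with $P_n[\Hp]\sse\Hp_0$ in its domain) gives $\langle x, Ay\rangle_\Hp = \langle Ax, y\rangle_\Hp = \langle P_nAQ_nx, y\rangle_\Hp$. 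The only care needed is to note that $P_n[\Hp]\sse\Hp_0\sse D(A)$ so the expressions make sense; for $H(s)$ this is exactly the setup of Section~\ref{sec:disc}.

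For (ii), the key is the variational characterization. Since $\psi$ is normalized, $\psi_n$ minimizes the Rayleigh quotient $x\mapsto\langle x, P_nH(s)Q_nx\rangle_\Hp$ over normalized $x\in P_n[\Hp]$, and the span of the $e_j$ is dense, so $\langle\psi_n, P_nH(s)Q_n\psi_n\rangle_\Hp \le \langle P_n\psi/\norm{P_n\psi}, P_nH(s)Q_n(P_n\psi/\norm{P_n\psi})\rangle_\Hp$ whenever $P_n\psi\ne0$; as $n\to\infty$, $P_n\psi\to\psi$ (Lemma~\ref{lemma:pointwise_1} applied with $A=\id$, or directly from the basis being orthonormal in $\Hp$) and $P_nH(s)Q_n\psi\to H(s)\psi$ by Lemma~\ref{lemma:pointwise_1} with $A=H(s)\in L(\Hp_1,\Hp)$, so the right-hand side tends to $\langle\psi, H(s)\psi\rangle_\Hp=E_0$. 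This gives $\limsup_n\langle\psi_n, P_nH(s)Q_n\psi_n\rangle_\Hp\le E_0$. The reverse inequality is cheap: $P_nH(s)Q_n$ is self-adjoint on $P_n[\Hp]$ by (i) and is bounded below by $E_0$ — indeed for normalized $x\in P_n[\Hp]$, $\langle x, P_nH(s)Q_nx\rangle_\Hp=\langle x, H(s)x\rangle_\Hp\ge E_0$ since $x\in\Hp_0\sse D(H(s))$ and $\sigma(H(s))\ge E_0$. The middle equality $\lim_n\langle\psi, P_nH(s)Q_n\psi\rangle_\Hp=\langle\psi,H(s)\psi\rangle_\Hp$ is again Lemma~\ref{lemma:pointwise_1}.

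For (iii), I would use the energy gap. Write $\psi_n = c_n\psi + \phi_n$ with $\phi_n\perp\psi$ in $\Hp$; then $c_n=\langle\psi,\psi_n\rangle_\Hp$ and we want $c_n\psi_n\to\psi$, equivalently $|c_n|\to1$ and $c_n\phi_n\to0$. From the spectral decomposition of $\hat H=H(s)-E_0\ge0$ with $\ker\hat H=\lin\{\psi\}$ and $\sigma(\hat H)\setminus\{0\}\sse[\eps,\infty)$, we get $\langle\psi_n,\hat H\psi_n\rangle_\Hp\ge\eps\norm{\phi_n}_\Hp^2$ (valid since $\psi_n\in\Hp_0\sse D(H(s))$). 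But (ii) gives $\langle\psi_n,\hat H\psi_n\rangle_\Hp=\langle\psi_n,P_nH(s)Q_n\psi_n\rangle_\Hp-E_0\to0$, so $\norm{\phi_n}_\Hp\to0$; since $\norm{\psi_n}_\Hp=1$ this forces $|c_n|^2=1-\norm{\phi_n}_\Hp^2\to1$, and then $\norm{c_n\psi_n-c_n^2\psi}_\Hp=|c_n|\norm{\phi_n}_\Hp\to0$ while $c_n^2\psi\to\psi$ up to the unimodular phase — here I would note that $\psi_n$ is only determined up to phase, so one fixes the phase of each $\psi_n$ so that $c_n\ge0$, whence $c_n\to1$ and $c_n\psi_n\to\psi$. (The statement $\langle\psi_n,\psi\rangle_\Hp\psi_n\to\psi$ is exactly this after that phase normalization; I would remark on this convention explicitly.)

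For (iv), the first limit is the main point. Write
\begin{align*}
  \langle\psi_n,P_nAQ_n\psi_n\rangle_\Hp - \langle\psi,A\psi\rangle_\Hp
  = \langle\psi_n,P_nAQ_n\psi_n\rangle_\Hp - \langle\psi_n,A\psi_n\rangle_\Hp
  + \langle\psi_n,A\psi_n\rangle_\Hp - \langle\psi,A\psi\rangle_\Hp,
\end{align*}
where the first difference vanishes identically because $\psi_n\in P_n[\Hp]$ so $Q_n\psi_n=\psi_n$ and $P_n$ is the $\Hp$-identity on $P_n[\Hp]$ (this is why the two displayed left-hand sides in (iv) are equal). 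For the remaining term, set $\psi_n'$ to be the phase-normalized sequence of (iii), so $\psi_n'\to\psi$ in $\Hp$; then $\langle\psi_n',A\psi_n'\rangle_\Hp=\langle A^*\psi_n',\psi_n'\rangle_\Hp$ provided $\psi_n'\in D(A)$, which holds since $P_n[\Hp]\sse\Hp_0\sse\Hp_1$ and $A\in L(\Hp_1,\Hp)$. Decompose
\begin{align*}
  \langle\psi_n',A\psi_n'\rangle_\Hp - \langle\psi,A\psi\rangle_\Hp
  = \langle\psi_n'-\psi, A\psi_n'\rangle_\Hp + \langle\psi, A\psi_n'-A\psi\rangle_\Hp.
\end{align*}
The first term is bounded by $\norm{\psi_n'-\psi}_\Hp\cdot\norm{A\psi_n'}_\Hp$, which tends to $0$ since $\norm{\psi_n'-\psi}_\Hp\to0$ and $\norm{A\psi_n'}_\Hp=\norm{A\psi_n}_\Hp$ is bounded by hypothesis. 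The second term equals $\langle A^*\psi, \psi_n'-\psi\rangle_\Hp$ — here I use $\psi\in D(A^*)$ to move $A$ across — and is bounded by $\norm{A^*\psi}_\Hp\cdot\norm{\psi_n'-\psi}_\Hp\to0$. The second displayed limit, $\langle\psi,P_nAQ_n\psi\rangle_\Hp\to\langle\psi,A\psi\rangle_\Hp$, is immediate from Lemma~\ref{lemma:pointwise_1} applied to $x=\psi\in\Hp_1$ together with continuity of the inner product. The step I expect to be the genuine obstacle is the boundedness-of-variance argument in (iv): without the hypothesis that $(\norm{A\psi_n}_\Hp)_{n\in\nn}$ is bounded, the cross term $\langle\psi_n'-\psi,A\psi_n'\rangle_\Hp$ need not vanish, and indeed the whole point of including that assumption (and, downstream, the local boundedness in $C(\cn)$ in Theorem~\ref{thm:continuum-limit}) is to control exactly this term; everything else reduces to the strong-operator convergence already packaged in Lemmas~\ref{lemma:pointwise_1} and~\ref{lemma:pointwise_2} plus the spectral gap.
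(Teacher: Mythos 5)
Your arguments for (i), (iii), and (iv) are essentially the paper's: (i) reduces to $\langle\phi,P_nAQ_n\phi\rangle_\Hp=\langle\phi,A\phi\rangle_\Hp$ on $P_n[\Hp]$ (the paper phrases this via the numerical range and Hausdorff--Toeplitz, you via direct symmetry -- both fine in finite dimensions); (iii) is the same spectral-gap estimate $\langle\psi_n,\hat H\psi_n\rangle_\Hp\ge\eps\norm{(1-\pi)\psi_n}_\Hp^2$ (your phase-fixing is unnecessary, since $\overline{c_n}\psi_n=\abs{c_n}^2\psi+\overline{c_n}\phi_n\to\psi$ already without it, but harmless); and (iv) is the identical two-term decomposition using boundedness of $\norm{A\psi_n}_\Hp$ and $\psi\in D(A^*)$.

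The genuine gap is in (ii), in the choice of trial state. You bound the discretized ground-state energy by the Rayleigh quotient at $P_n\psi/\norm{P_n\psi}_\Hp$, where $P_n$ is the $\Hp$-orthogonal projection, and then claim the quotient tends to $\langle\psi,H(s)\psi\rangle_\Hp$. But $\langle P_n\psi,H(s)P_n\psi\rangle_\Hp\to\langle\psi,H(s)\psi\rangle_\Hp$ does not follow from $P_n\psi\to\psi$ in $\Hp$: the quadratic form of the unbounded operator $H(s)$ is not $\Hp$-continuous, and $H(s)$ is only controlled as a map $\Hp_1\to\Hp$. Lemma~\ref{lemma:pointwise_1}, which you invoke, gives $P_nH(s)Q_nx\to H(s)x$ for a \emph{fixed} $x\in\Hp_1$; it says nothing about $H(s)P_n\psi$, because the $\Hp$-projection $P_n\psi$ need not converge (or even be controlled) in $\Hp_1$, so $H(s)P_n\psi$ need not converge to $H(s)\psi$ in $\Hp$. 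The paper avoids this by building trial states inside $\Hp_1$: it first replaces $\psi$ by Yosida regularizations $y_k=\lambda_k(\lambda_k+H(s))^{-1}\psi\in\Hp_1$ with $y_k\to\psi$ and $H(s)y_k\to H(s)\psi$ in $\Hp$, and then projects with $Q_m$ (the $\Hp_1$-orthogonal projection), for which $Q_my_k\to y_k$ in $\Hp_1$ and hence $H(s)Q_my_k\to H(s)y_k$ in $\Hp$ by boundedness of $H(s)\in L(\Hp_1,\Hp)$; a diagonal sequence $x_n=Q_{m_n}y_{k_n}$ then does the job. The minimal repair of your argument is to use $Q_n\psi/\norm{Q_n\psi}_\Hp$ instead of $P_n\psi/\norm{P_n\psi}_\Hp$ (legitimate when $\psi\in\Hp_1$, which the statement implicitly requires for $Q_n\psi$ to be defined), or to adopt the paper's Yosida detour if one only knows $\psi\in D(H(s))$. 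The rest of your (ii) -- the lower bound $\langle x,H(s)x\rangle_\Hp\ge E_0$ from $H(s)-E_0\ge0$ and the first limit via Lemma~\ref{lemma:pointwise_1} -- matches the paper.
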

\begin{proof}
  ``(i)'' Since $(P_n[\Hp],\langle\cdot,\cdot\rangle_\Hp)$ is a finite dimensional complex Hilbert space, we know that $P_nAQ_n$ is self-adjoint if and only if its numerical range 
  \begin{align*}
    \mathrm{NR}(P_nAQ_n):=\{\langle\phi,P_nAQ_n\phi\rangle_\Hp;\ \phi\in P_n[\Hp], \norm\phi_\Hp=1\}
  \end{align*}
  is an interval. By the Hausdorff-Toeplitz theorem, $\mathrm{NR}(P_nAQ_n)$ is convex and compact. Hence, it suffices to show that $\mathrm{NR}(P_nAQ_n)\sse\rn$. However, that claim follows directly from self-adjointness of $A$ in $\Hp$ 
  \begin{align*}
    \fa\phi\in P_n[\Hp]:\ \langle\phi,P_nAQ_n\phi\rangle_\Hp=&\langle P_n\phi,AQ_n\phi\rangle_\Hp=\langle\phi,A\phi\rangle_{\Hp}\in\rn
  \end{align*}
  since $P_n$ and $Q_n$ are the identity on $P_n[\Hp]$.

  ``(ii)'' In order to show the convergence claim, we will first note that 
  \begin{align*}
    \lim_{n\to\infty}\langle\psi,P_nH(s)Q_n\psi\rangle_\Hp=\langle\psi,H(s)\psi\rangle_\Hp
  \end{align*}
  follows directly from $\norm{(H(s)-P_nH(s)Q_n)\psi}_\Hp\to0$.

  Regarding $\l(\langle\psi_n,P_nH(s)Q_n\psi_n\rangle_\Hp\r)_{n\in\nn}$, we note that $\fa m,n\in\nn:\ m\ge n\ \then P_n[\Hp]\sse P_m[\Hp]$ implies that $\l(\langle\psi_n,P_nH(s)Q_n\psi_n\rangle_\Hp\r)_{n\in\nn}$ is non-increasing. Furthermore, since $H(s)$ is self-adjoint, its spectrum coincides with its approximate point spectrum which itself is contained in the closure of the numerical range of an operator, i.e., we obtain 
  \begin{align*}
    E_0=&\min\mathrm{NR}(H(s))=\min\{\langle\phi,H(s)\phi\rangle_\Hp;\ \phi\in D(H(s)),\ \norm\phi_\Hp=1\}
  \end{align*}
  since $H-E_0\ge0$ implies $\langle\phi,H(s)\phi\rangle_\Hp\ge E_0$. Hence, $\l(\langle\psi_n,P_nH(s)Q_n\psi_n\rangle_\Hp\r)_{n\in\nn}$ is convergent to some value $E\ge E_0=\langle\psi,H(s)\psi\rangle_\Hp$. 

  Since $\fa x\in P_n[\Hp]\setminus\{0\}:\ E\le\langle\psi_n,P_nH(s)Q_n\psi_n\rangle_\Hp=\langle\psi_n,H(s)\psi_n\rangle_\Hp\le\frac{\langle x,H(s)x\rangle_\Hp}{\norm x_\Hp^{2}}$, it suffices to find a sequence $(x_n)_{n\in\nn}$ such that $\fa n\in\nn\ \ex m\in\nn:\ x_n\in P_m[\Hp]$, $\norm{x_n-\psi}_{\Hp}\to0$, and $\norm{H(s)x_n-H(s)\psi}_\Hp\to0$ as these conditions imply
  \begin{align*}
    E_0\le&E\le\langle x_n,H(s)x_n\rangle_\Hp=\langle x_n-\psi,H(s)x_n\rangle_\Hp+\langle \psi,H(s)x_n\rangle_\Hp\to E_0
  \end{align*}
  using $\abs{\langle x_n-\psi,H(s)x_n\rangle_\Hp}\le\ubr{\norm{x_n-\psi}_\Hp}_{\to0}\ubr{\norm{H(s)x_n}_\Hp}_{\text{bounded}}\to0$ as well as $\langle \psi,H(s)x_n\rangle_\Hp\to\langle \psi,H(s)\psi\rangle_\Hp$.

  In order to find such a sequence, we will use that each $-H(s)$ generates a $C_0$-semigroup, i.e., there exists a strictly increasing $(\lambda_k)_{k\in\nn}\in\l(\rho(-H(s))\cap\rn_{>0}\r)^{\nn}$ such that $\lambda_k\nearrow\infty$, $\lambda_k(\lambda_k+H(s))^{-1}\psi\to\psi$ in $\Hp$, and $H(s)\lambda_k(\lambda_k+H(s))^{-1}\psi\to H(s)\psi$ in $\Hp$ where the Yosida approximation $H(s)\lambda_k(\lambda_k+H(s))^{-1}=\lambda_k-\lambda_k^2(\lambda_k+H(s))^{-1}$ of $H(s)$ is a bounded operator on $\Hp$.

  Choosing one such sequence $(\lambda_k)_{k\in\nn}$, let $y_k:=\lambda_k(\lambda_k+H(s))^{-1}\psi\in\Hp_1$. Then we obtain that 
  \begin{align*}
    \fa k\in\nn:\ &\norm{Q_my_k-y_k}_{\Hp_1}\to0\ (m\to\infty)\\
    &\norm{Q_my_k-y_k}_{\Hp}\le\norm{\id}_{L(\Hp_1,\Hp)}\norm{Q_my_k-y_k}_{\Hp_1}\to0
  \end{align*}
  and
  \begin{align*}
    \norm{H(s)Q_my_k-H(s)y_k}_{\Hp}
    \le&\norm{H(s)}_{L(\Hp_1,\Hp)}\norm{Q_my_k-y_k}_{\Hp_1}
    \to0\ (m\to\infty).
  \end{align*}
  For $n\in\nn$ choose $k_n,m_n\in\nn$ such that $\norm{y_{k_n}-\psi}_\Hp<\frac{1}{2n}$, $\norm{Q_{m_n}y_{k_n}-y_{k_n}}_{\Hp}<\frac{1}{2n}$, $\norm{H(s)y_{k_n}-H(s)\psi}_\Hp<\frac{1}{2n}$, and $\norm{H(s)Q_{m_n}y_{k_n}-H(s)y_{k_n}}_{\Hp}<\frac{1}{2n}$. Then 
  \begin{align*}
    \fa n\in\nn:\ x_n:=Q_{m_n}y_{k_n}\in P_{m_n}[\Hp]
  \end{align*}
  implies $x_n\to\psi$ and $H(s)x_n\to H(s)\psi$ in $\Hp$ and thus the assertion.

  ``(iii)'' Since $\hat H$ is strictly positive on the orthocomplement of $\ker \hat H$, there exists $\eps\in\rn_{>0}$ such that $\fa x\in(\ker \hat H)^\perp:\ \langle x,\hat Hx\rangle_{\Hp}\ge\eps\norm x_\Hp^2$. Let $\pi$ be the orthoprojector onto $\ker \hat H$. Then we observe
  \begin{align*}
    0=&\lim_{n\to\infty}\langle\psi_n,\hat H\psi_n\rangle_{\Hp}
    =\lim_{n\to\infty}\ubr{\langle\pi\psi_n,\hat H\pi\psi_n\rangle_{\Hp}}_{=0}+\langle(1-\pi)\psi_n,\hat H(1-\pi)\psi_n\rangle_{\Hp}\\
    \ge&\lim_{n\to\infty}\eps\norm{(1-\pi)\psi_n}_\Hp^2.
  \end{align*}
  Hence, $\norm{(1-\pi)\psi_n}_\Hp\to0$ and 
  \begin{align*}
    1=\norm{\psi_n}_\Hp^2=\norm{\pi\psi_n}_\Hp^2+\norm{(1-\pi)\psi_n}_\Hp^2
  \end{align*}
  implies $\abs{\langle\psi_n,\psi\rangle_\Hp}=\norm{\pi\psi_n}_\Hp\to1$ and
  \begin{align*}
    \norm{\langle\psi_n,\psi\rangle_\Hp\psi_n-\psi}_\Hp^2
    =&\norm{\langle\psi_n,\psi\rangle_\Hp\pi\psi_n-\psi+\langle\psi_n,\psi\rangle_\Hp(1-\pi)\psi_n}_\Hp^2\\
    =&\norm{\abs{\langle\psi_n,\psi\rangle_\Hp}^2\psi-\psi}_\Hp^2+\abs{\langle\psi_n,\psi\rangle_\Hp}^2\norm{(1-\pi)\psi_n}_\Hp^2\\
    =&\ubr{\abs{\abs{\langle\psi_n,\psi\rangle_\Hp}^2-1}}_{\to0}+\ubr{\abs{\langle\psi_n,\psi\rangle_\Hp}^2}_{\to1}\ubr{\norm{(1-\pi)\psi_n}_\Hp^2}_{\to0}
  \end{align*}
  shows the assertion.

  ``(iv)'' Finally, (iv) follows directly from (iii) since we can assume $\psi_n=\frac{\langle\psi_n,\psi\rangle_\Hp}{\abs{\langle\psi_n,\psi\rangle_\Hp}}\psi_n$ without loss of generality and observe
  \begin{align*}
    \abs{\langle\psi_n,A\psi_n\rangle_\Hp-\langle\psi,A\psi\rangle_\Hp}
    =&\abs{\langle\psi_n-\psi,A\psi_n\rangle_\Hp-\langle\psi,A(\psi_n-\psi)\rangle_\Hp}\\
    \le&\ubr{\norm{\psi_n-\psi}_\Hp}_{\to0}\ubr{\norm{A\psi_n}_\Hp}_{\text{bounded}}+\norm{A^*\psi}_\Hp\ubr{\norm{\psi_n-\psi}_\Hp}_{\to0}
  \end{align*}
  as well as
  \begin{align*}
    \langle\psi,P_nAQ_n\psi\rangle_\Hp\to\langle\psi,A\psi\rangle_\Hp
  \end{align*}
  which follows from Lemma~\ref{lemma:pointwise_1}.
\end{proof}

\begin{remark*}
  The non-degeneracy assumption on the vacuum has to be satisfied in a Wightman theory. In a free field theory, this is satisfied because the theory is essentially an infinite-dimensional harmonic oscillator and its ground state is essentially an infinite tensor product of one-dimensional harmonic oscillator ground states. Similarly, in condensed matter physics, most phases of a material have a unique ground state as well. In the case of moduli spaces, high degeneracy is generally possible but they usually generate their own superselection sectors which are separated and hence the Hilbert space $\Hp$ is restricted to be one such superselection sector making the vacuum in $\Hp$ unique again.
\end{remark*}

Since $\Hp$ and $\Hp_1$ are often Sobolev space $W_2^s(X)$, it is canonically possible to choose $\Hp_1$ in such a way, that the operators $H$, $A$, $\ldots$ are of trace-class, i.e., in the Schatten class $\Sp_1(\Hp_1,\Hp)$. In that case, we can improve Lemma~\ref{lemma:pointwise_1}. 
\begin{lemma}\label{lemma:convergence_in_Sp}
  Let $p\in\rn_{\ge1}\cup\{\infty\}$ and $A\in\Sp_p(\Hp_1,\Hp)$. Then $\norm{P_nAQ_n-A}_{\Sp_p(\Hp_1,\Hp)}\to0$ and hence $\norm{P_nAQ_n-A}_{L(\Hp_1,\Hp)}\to0$ as well. 
\end{lemma}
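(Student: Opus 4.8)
The plan is to reduce the statement to the finite-rank case and then exploit that on operators of bounded rank the Schatten-$p$ norm is controlled by the operator norm. Since $A\in\Sp_p(\Hp_1,\Hp)$ is in particular compact, truncating its Schmidt decomposition yields finite-rank operators approximating $A$ in $\Sp_p(\Hp_1,\Hp)$ (for $p=\infty$, where $\Sp_\infty$ is read as the compacts equipped with the operator norm, this is just the standard approximation of compacts by finite-rank operators). So, given $\eps>0$, I would fix a finite-rank $F$ of rank $r$ with $\norm{A-F}_{\Sp_p(\Hp_1,\Hp)}<\eps$ and write
\[
  \norm{P_nAQ_n-A}_{\Sp_p}\le\norm{P_n(A-F)Q_n}_{\Sp_p}+\norm{P_nFQ_n-F}_{\Sp_p}+\norm{A-F}_{\Sp_p}.
\]
Because $\norm{\cdot}_{\Sp_p}$ is an operator ideal and $P_n$, $Q_n$ are orthogonal projections (of operator norm $\le1$), the first term is $\le\norm{A-F}_{\Sp_p}<\eps$; the last term is $<\eps$ by the choice of $F$.

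It then remains to treat $\norm{P_nFQ_n-F}_{\Sp_p}$ for a \emph{fixed} finite-rank $F$. The key observation is that $P_nFQ_n-F$ has rank at most $2r$ for every $n$, so $\norm{P_nFQ_n-F}_{\Sp_p(\Hp_1,\Hp)}\le(2r)^{1/p}\norm{P_nFQ_n-F}_{L(\Hp_1,\Hp)}$ with the convention $(2r)^{1/\infty}:=1$, and I am left to show $P_nFQ_n\to F$ in operator norm. Writing $F=\sum_{j=1}^r\langle f_j,\cdot\rangle_{\Hp_1}g_j$ with $f_j\in\Hp_1$, $g_j\in\Hp$, and using that $Q_n$ is self-adjoint on $\Hp_1$, one gets $P_nFQ_n=\sum_{j=1}^r\langle Q_nf_j,\cdot\rangle_{\Hp_1}P_ng_j$; since $P_n\to1_\Hp$ and $Q_n\to1_{\Hp_1}$ strongly (Parseval, using $\lin\{e_j';\,0\le j<n\}=P_n[\Hp]$ as in Lemma~\ref{lemma:pointwise_1}), each of the finitely many summands converges in operator norm, hence so does the sum. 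Together this yields $\limsup_{n\to\infty}\norm{P_nAQ_n-A}_{\Sp_p}\le2\eps$, and since $\eps>0$ was arbitrary the first assertion follows. The second assertion is then immediate from $\norm{\cdot}_{L(\Hp_1,\Hp)}\le\norm{\cdot}_{\Sp_p(\Hp_1,\Hp)}$.

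There is no substantial obstacle in this argument; the delicate points are essentially bookkeeping. First, one must run $p=\infty$ on the same footing as $p<\infty$: there $\Sp_\infty$ denotes the compact operators with the operator norm, finite-rank density holds in that norm, the factor $(2r)^{1/p}$ degenerates to $1$, and the whole argument collapses to the operator-norm statement. Second, the passage from operator-norm to $\Sp_p$-norm convergence in the finite-rank step is legitimate only because the rank of $P_nFQ_n-F$ is bounded uniformly in $n$. Finally, one should confirm — exactly as in the setup of Lemma~\ref{lemma:pointwise_1} — that there is an orthonormal basis $(e_j')$ of $\Hp_1$ adapted to the nested subspaces $P_n[\Hp]$, so that the orthogonal projections $Q_n$ onto $P_n[\Hp]$ in $\Hp_1$ converge strongly to $1_{\Hp_1}$; this is just Gram--Schmidt applied successively along the chain $P_1[\Hp]\sse P_2[\Hp]\sse\cdots$.
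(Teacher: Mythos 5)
Your proposal is correct and follows essentially the same route as the paper: approximate $A$ by a finite-rank operator in $\Sp_p$, use the ideal property of the Schatten norm to absorb the projections on the error term, and reduce the remaining finite-rank term to operator-norm convergence driven by the strong convergence $P_n\to1$ in $\Hp$ and $Q_n\to1$ in $\Hp_1$. The only (cosmetic) difference is that you control the finite-rank term via the rank bound $\norm{T}_{\Sp_p}\le(\operatorname{rank}T)^{1/p}\norm{T}_{L(\Hp_1,\Hp)}$, whereas the paper decomposes it into rank-one pieces on which the $\Sp_p$- and operator norms coincide; both are standard and equivalent in effect.
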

\begin{proof}
  Let $\eps\in\rn_{>0}$. There exists a finite rank operator $A_0=\sum_{k=0}^{K-1}\alpha_k\langle x_k,\cdot\rangle_{\Hp_1}y_k$ with $\alpha\in\cn^K$ and orthonormal families $(x_k)_{0\le k< K}\in\Hp_1^K$ and $(y_k)_{0\le k< K}\in\Hp^K$ such that $\norm{A-A_0}_{\Sp_p(\Hp_1,\Hp)}<\frac{\eps}{3}$. Hence,
  \begin{align*}
    \norm{P_nAQ_n-A}_{\Sp_p(\Hp_1,\Hp)}
    \le&\ubr{\norm{P_nAQ_n-P_nA_0Q_n}_{\Sp_p(\Hp_1,\Hp)}}_{\le\norm{P_n}_{L(\Hp_1,\Hp)}\norm{A-A_0}_{\Sp_p(\Hp_1,\Hp)}\norm{Q_n}_{L(\Hp_1,\Hp)}}
    +\norm{P_nA_0Q_n-A_0}_{\Sp_p(\Hp_1,\Hp)}\\
    &+\norm{A_0-A}_{\Sp_p(\Hp_1,\Hp)}
  \end{align*}
  implies that it suffices to show that $\norm{P_nA_0Q_n-A_0}_{\Sp_p(\Hp_1,\Hp)}$ is eventually bounded by $\frac\eps3$. By splitting $\norm{P_nA_0Q_n-A_0}_{\Sp_p(\Hp_1,\Hp)}$ into $\norm{P_nA_0(Q_n-1)}_{\Sp_p(\Hp_1,\Hp)}+\norm{(P_n-1)A_0}_{\Sp_p(\Hp_1,\Hp)}$ and, using that the $\Sp_p(\Hp_1,\Hp)$-norm and $L(\Hp_1,\Hp)$-norm coincide on rank-$1$ operators, we observe
  \begin{align*}
    \norm{P_nA_0(Q_n-1)}_{\Sp_p(\Hp_1,\Hp)}
    \le&\sum_{k=0}^{K-1}\abs{\alpha_k}\norm{\langle(Q_n-1)x_k,\cdot\rangle_{\Hp_1}P_ny_k}_{\Sp_p(\Hp_1,\Hp)}\\
    =&\sum_{k=0}^{K-1}\abs{\alpha_k}\norm{\langle(Q_n-1)x_k,\cdot\rangle_{\Hp_1}P_ny_k}_{L(\Hp_1,\Hp)}\\
    \le&\sum_{k=0}^{K-1}\abs{\alpha_k}\ubr{\norm{(Q_n-1)x_k}_{\Hp_1}}_{\to0}\ubr{\norm{P_ny_k}_{\Hp}}_{\le1}
  \end{align*}
  and 
  \begin{align*}
    \norm{(P_n-1)A_0}_{\Sp_p(\Hp_1,\Hp)}
    \le&\sum_{k=0}^{K-1}\abs{\alpha_k}\norm{\langle x_k,\cdot\rangle_{\Hp_1}(P_n-1)y_k}_{\Sp_p(\Hp_1,\Hp)}\\
    =&\sum_{k=0}^{K-1}\abs{\alpha_k}\norm{\langle x_k,\cdot\rangle_{\Hp_1}(P_n-1)y_k}_{L(\Hp_1,\Hp)}\\
    \le&\sum_{k=0}^{K-1}\abs{\alpha_k}\ubr{\norm{x_k}_{\Hp_1}}_{=1}\ubr{\norm{P_ny_k}_{\Hp}}_{\to0}
  \end{align*}
  which completes the proof.
\end{proof}
Later in the proof of Theorem~\ref{thm:continuum-limit}, we will want to show for two trace-class operators $A$ and $B$ that $\frac{\langle\psi_n,A\psi_n\rangle_{\Hp}}{\langle\psi_n,B\psi_n\rangle_{\Hp}}=\lim_{T\to\infty}\frac{\tr(U_nP_nAQ_n)}{\tr(U_nP_nBQ_n)}\to\lim_{T\to\infty}\frac{\tr(UA)}{\tr(UB)}$ holds where $U_n=\texp\l(-\frac i\hbar\int_0^TP_nH(s)Q_nds\r)$. We will do this by showing $\frac{\langle\psi_n,A\psi_n\rangle_{\Hp}}{\langle\psi_n,B\psi_n\rangle_{\Hp}}\to\frac{\langle\psi,A\psi\rangle_\Hp}{\langle\psi,B\psi\rangle_\Hp}$ and using $\langle\psi,A\psi\rangle_\Hp=\lim_{T\to\infty}\frac{\tr(UA)}{Z}$ where $Z$ is the partition function. Hence, the partition function cancels out and we obtain the required result. However, since the partition function itself formally evaluates to $Z=\tr U$ which is mathematically ill-defined, mathematically rigorous existence of $Z$ can be elusive. Hence, in the trace-class setting of the lemma above, we can prove $\frac{\tr(U_nP_nAQ_n)}{\tr(U_nP_nBQ_n)}\to\frac{\tr(UA)}{\tr(UB)}$ directly which would circumvent the problem of existence of $Z$ since the discretized partition function $\tr U_n$ is well-defined ($U_n$ is simply a matrix). 
\begin{lemma}\label{lemma:convergence_of_traces}
  Let the Hamiltonian $H$ satisfy the conditions necessary for the time-dependent Hille-Yosida theorem (cf. Theorem 5.3.1 in$\ $\cite{pazy}) and $\Hp_1$ such that $H(s)\in\Sp_1(\Hp_1,\Hp)$ for all $s\in[0,T]$, as well as $A,B\in\Sp_1(\Hp)$. Then
  \begin{align*}
    \frac{\tr(U_nP_nAQ_n)}{\tr(U_nP_nBQ_n)}\to\frac{\tr(UA)}{\tr(UB)}.
  \end{align*}
\end{lemma}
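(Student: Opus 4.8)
The plan is to reduce the convergence of the quotient of traces to the convergence of the individual operators in the relevant norms, using the trace-class structure supplied by Lemma~\ref{lemma:convergence_in_Sp} together with uniform control of the discretized propagators $U_n$. First I would observe that, under the time-dependent Hille--Yosida hypotheses, the propagator $U$ and each discretized propagator $U_n = \texp\l(-\tfrac{i}{\hbar}\int_0^T P_nH(s)Q_n\,ds\r)$ are uniformly bounded operators on $\Hp$ with a bound independent of $n$ (the discretized generators $-P_nH(s)Q_n$ inherit the dissipativity/stability estimates of $-H(s)$ because $P_n$, $Q_n$ are contractions and act as the identity on $P_n[\Hp]$; cf. part (i) of Proposition~\ref{prop:continuum-limit} for self-adjointness of the truncations). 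Moreover $U_n \to U$ in the strong operator topology: this is the standard Trotter--Kato approximation statement, whose hypothesis is precisely that $P_nH(s)Q_n \to H(s)$ strongly on a core, which holds here since $H(s)\in\Sp_1(\Hp_1,\Hp)$ gives $\norm{P_nH(s)Q_n - H(s)}_{L(\Hp_1,\Hp)}\to 0$ by Lemma~\ref{lemma:convergence_in_Sp}, hence in particular strong convergence on $\Hp_1$, which is dense in $\Hp$.

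Next I would handle the numerator. Write
\begin{align*}
  \tr(U_nP_nAQ_n) - \tr(UA)
  = \tr\bigl(U_n(P_nAQ_n - A)\bigr) + \tr\bigl((U_n - U)A\bigr).
\end{align*}
For the first term, estimate $\abs{\tr(U_n(P_nAQ_n-A))} \le \norm{U_n}_{L(\Hp)}\,\norm{P_nAQ_n - A}_{\Sp_1(\Hp_1,\Hp)}$, which tends to $0$ by the uniform bound on $\norm{U_n}$ and Lemma~\ref{lemma:convergence_in_Sp} applied to $A\in\Sp_1(\Hp)\subseteq\Sp_1(\Hp_1,\Hp)$. For the second term, since $A$ is trace-class we may write $A = \sum_k \alpha_k \langle x_k,\cdot\rangle y_k$ with $(\alpha_k)\in\ell_1$ and orthonormal families, so that $\tr((U_n-U)A) = \sum_k \alpha_k \langle x_k,(U_n-U)y_k\rangle$; each summand goes to $0$ by strong convergence $U_n\to U$, and the sum is dominated by $\bigl(\sup_n\norm{U_n} + \norm{U}\bigr)\sum_k\abs{\alpha_k}$, so dominated convergence gives $\tr((U_n-U)A)\to 0$. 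The same argument applied verbatim to $B$ gives $\tr(U_nP_nBQ_n)\to\tr(UB)$.

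Finally, to pass to the quotient I need the denominator limit $\tr(UB)$ to be nonzero (this is implicitly required for the statement to be meaningful, playing the role of the partition function being nonvanishing); granting that, $\tr(U_nP_nBQ_n)$ is eventually bounded away from $0$, and the quotient converges by the algebra of limits. The main obstacle is the second step above — establishing $U_n\to U$ strongly with the uniform bound $\sup_n\norm{U_n}_{L(\Hp)}<\infty$ — because the time-ordered exponential for a genuinely time-dependent, unbounded generator is delicate; one must invoke the time-dependent Trotter--Kato theory (e.g.\ via the evolution-system formulation underlying Theorem~5.3.1 in~\cite{pazy}) rather than a naive functional calculus, and check that the stability constants for the truncated generators are $n$-independent. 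Everything after that is the routine trace-norm bookkeeping sketched above.
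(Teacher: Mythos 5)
Your argument is correct and reaches the same conclusion, but it takes a genuinely different route at the decisive step. The paper splits $\tr(U_nA_n)-\tr(UA)$ as $\tr((U_n-U)A_n)+\tr(U(A_n-A))$ and bounds the first piece by $\norm{U_n-U}_{L(\Hp)}\norm{A_n}_{\Sp_1}$, so it is forced to prove \emph{operator-norm} convergence $\norm{U_n-U}_{L(\Hp)}\to0$; this is where most of its work goes, via a four-stage chain of approximations (time-slicing of the propagator, Yosida approximants, their discretizations, and back), each stage converging in norm because the generators lie in $\Sp_1(\Hp_1,\Hp)$. You instead split as $\tr(U_n(A_n-A))+\tr((U_n-U)A)$, so the difference of propagators only ever hits the \emph{fixed} trace-class operator $A$; expanding $A$ in its singular-value series and using dominated convergence, strong convergence of $U_n\to U$ plus a uniform bound $\sup_n\norm{U_n}_{L(\Hp)}<\infty$ suffices. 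That is a strictly weaker input, and it lets you delegate the hard analytic step to the standard (time-dependent) Trotter--Kato theory rather than reconstructing the approximation chain by hand; the price is that you must still verify the $n$-independent stability constants for the truncated generators, which you correctly flag as the remaining obligation and which is comparable to (but not more than) what the paper's proof has to establish. Two minor points: your explicit remark that $\tr(UB)\ne0$ is needed to pass to the quotient is a hypothesis the paper leaves implicit, and is worth keeping; and your appeal to Lemma~\ref{lemma:convergence_in_Sp} to control $\norm{P_nAQ_n-A}_{\Sp_1}$ conflates the $\Sp_1(\Hp_1,\Hp)$ and $\Sp_1(\Hp)$ norms --- but the paper's own estimate does exactly the same, so this is an artifact of the shared framework rather than a defect of your approach.
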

\begin{proof}
  Let $A_n:=P_nAQ_n$. In order to prove the assertion, it suffices to show that both numerator and denominator converge separately. Using the lemma above,
  \begin{align*}
    \abs{\tr(U_nA_n)-\tr(UA)}
    \le&\abs{\tr((U_n-U)A_n)}+\abs{\tr(U(A_n-A))}\\
    \le&\norm{U_n-U}_{L(\Hp)}\norm{A_n}_{\Sp_1(\Hp)}+\norm{U}_{L(\Hp)}\norm{A_n-A}_{\Sp_1(\Hp)}
  \end{align*}
  implies that it suffices to show $\norm{U_n-U}_{L(\Hp)}\to0$.

  According to the proof of the time-dependent Hille-Yosida theorem, we can define $s_k:=\frac{kT}{K}$ for $0\le k<K$ ($K\in\nn$), let $S_k$ be the semigroup generated by $-\frac i\hbar H(s_k)$, and define a semigroup $\tilde U_K$ as $\tilde U_K(t-s)=S_k(t-s)$ if $s_k\le s\le t\le s_{k+1}$ and $\tilde U_K(t-s)=S_k(t-s_k)S_{k-1}(T/K)\ldots S_{j+1}(T/K)S_{j}(s_{j+1}-s)$ if $s\in[s_j,s_{j+1}]$ and $t\in[s_k,s_{k+1}]$. Then $\norm{\tilde U_K(T,0)-U}_{L(\Hp)}\to0$.

  Furthermore, the Yosida approximations $H_k:=-\frac i\hbar H(s_k)\lambda_k\l(\lambda_k+\frac i\hbar H(s_k)\r)^{-1}$ generate semigroups $R_k$ and we can define the semigroup $\hat U_K$ as $\hat U_K(t-s)=R_k(t-s)$ if $s_k\le s\le t\le s_{k+1}$ and $\hat U_K(t-s)=R_k(t-s_k)R_{k-1}(T/K)\ldots R_{j+1}(T/K)S_{j}(s_{j+1}-s)$ if $s\in[s_j,s_{j+1}]$ and $t\in[s_k,s_{k+1}]$ to obtain $\norm{\hat U_K(T,0)-\tilde U_K(T,0)}_{L(\Hp)}\to0$.

  Let $\eps\in\rn_{>0}$ and $K_0\in\nn$ such that $\fa K\in\nn_{\ge K_0}:\ \norm{\tilde U_K(T,0)-U}_{L(\Hp)}<\frac\eps5$ and $\norm{\hat U_K(T,0)-\tilde U_K(T,0)}_{L(\Hp)}<\frac\eps5$. Since $\fa k:\ H_k\in\Sp_1(\Hp_1,\Hp)$ and there are only finitely many $k$, $\ex N\in\nn\ \fa n\in\nn_{\ge N}:\ \norm{\hat U_K(T,0)-\hat U_K^n(T,0)}_{L(\Hp)}<\frac\eps5$ where $\hat U_K^n$ is constructed correspondingly to $\hat U_K$ but replacing each $H_k$ with $P_nH_kQ_n$.

  Finally, and possibly increasing $N$, $K$, and the $\lambda_k$, we can find $\tilde U_K^n$ with corresponding generators $P_nH(s_k)Q_n$ such that $\fa K\in\nn_{\ge K_0}:\ \norm{\tilde U_K^n(T,0)-U_n}_{L(\Hp)}<\frac\eps5$ and $\norm{\hat U_K^n(T,0)-\tilde U_K^n(T,0)}_{L(\Hp)}<\frac\eps5$, which yields $\norm{U_n-U}_{L(\Hp)}\to0$ and completes the proof.
\end{proof}

\begin{remark*}
  It may be advantageous to consider the proof of Lemma~\ref{lemma:convergence_of_traces} in terms of time-independent Hamiltonians for simplicity. Using Lemma~\ref{lemma:convergence_in_Sp}, it suffices to show that $(U_n(T))_{n\in\nn}$ converges to $U(T)$ in $L(\Hp)$ where $U$ is the semigroup generated by $-\frac{i}{\hbar}H$ and $U_n$ is the semigroup generated by $H_n:=-\frac{i}{\hbar}P_nHQ_n$. Following the theorem of Hille-Yosida we define the Yosida approximations $H_{(\lambda)}:=\lambda H(\lambda-H)^{-1}$ and $H_{(\lambda),n}:=\lambda H_n(\lambda-H_n)^{-1}$ for real $\lambda$ in the respective resolvent sets. The Yosida approximations are thus bounded operators on $\Hp$ and generate the semigroups $U_{(\lambda)}:=\l(\exp(tH_{(\lambda)})\r)_{t\in\rn_{\ge0}}$ and $U_{(\lambda),n}:=\l(\exp(tH_{(\lambda),n})\r)_{t\in\rn_{\ge0}}$. Then, the crucial step in the proof of Hille-Yosida is to show that $\lambda\to\infty$ implies uniform convergence for $t$ in compact subsets of $\rn_{\ge0}$. In particular, $U_{(\lambda)}(T)\to U(T)$ and $U_{(\lambda),n}(T)\to U_{n}(T)$ holds in $L(\Hp)$.

  Hence, it suffices to show $U_{(\lambda),n}(T)\to U_{(\lambda)}(T)$ in $L(\Hp)$ but these are exponential series of bounded operators. In other words, proving $H_{(\lambda),n}\to H_{(\lambda)}$ in $L(\Hp)$ is sufficient. At this point, Lemma~\ref{lemma:convergence_in_Sp} implies $H_n\to H$ in $L(\Hp_1,\Hp)$, i.e., the assertion follows if we can show that $A\mapsto (\lambda-A)^{-1}$ is continuous with respect to $L(\Hp,\Hp_1)$. For this last step, we use the resolvent identity $(\lambda-A_1)^{-1}-(\lambda-A_2)^{-1}=(\lambda-A_1)^{-1}(A_1-A_2)(\lambda-A_2)^{-1}$ which implies $\norm{R_\lambda(H)-R_\lambda(H_n)}_{L(\Hp,\Hp_1)}\le\norm{R_\lambda(H)}_{L(\Hp,\Hp_1)}\norm{H-H_n}_{L(\Hp_1,\Hp)}\norm{R_\lambda(H_n)}_{L(\Hp,\Hp_1)}$ where $R_\lambda(H):=(\lambda-H)^{-1}$ and $R_\lambda(H_n):=(\lambda-H_n)^{-1}$. Thus, the assertion follows from boundedness of $n\mapsto\norm{R_\lambda(H_n)}_{L(\Hp,\Hp_1)}$ which is a consequence of the Neumann series since $\norm{R_\lambda(H_n)}_{L(\Hp,\Hp_1)}\le\frac{\norm{R_\lambda(H)}_{L(\Hp,\Hp_1)}}{1-q}$ holds for $\norm{H-H_n}_{L(\Hp_1,\Hp)}<q\norm{R_\lambda(H)}_{L(\Hp,\Hp_1)}^{-1}$ with $q\in(0,1)$.

  Since semigroups of time-dependent Hamiltonians are constructed using an analogue of the forward Euler method, there remains only one more limit to consider in the proof of Lemma~\ref{lemma:convergence_of_traces}.
\end{remark*}
  
\begin{example*}
  Since the $H(s)$ are usually unbounded operators in $\Hp$, it may at first seem surprising to ask for the $H(s)$ to be of trace-class. However, this is more an assumption about the topology of $\Hp_1$. To illustrate this, we will explicitly discuss a simple example of this case.

  Consider $l_j:\ [0,2\pi]\to\rn;\ x\mapsto\sin(j^2x)$ for each $j\in\nn$, $L:=\lin\{l_j;\ j\in\nn\}$, as well as the following norms on $L$
  \begin{align*}
    \norm\cdot_{L_2}:&\ L\to\rn;\ f\mapsto\norm f_{L_2([0,2\pi])}\\
    \norm\cdot_{W_{2,0}^1}:&\ L\to\rn;\ f\mapsto\norm{\d f}_{L_2([0,2\pi])}.
  \end{align*}
  Then we define $\Hp:=\oli{L}^{\norm\cdot_{L_2}}$ and $\Wp:=\oli{L}^{\norm\cdot_{W_{2,0}^1}}$. Furthermore, let $H$ be a closed operator on $\Wp$ (e.g., a differential operator) and $\Hp_1:=\oli{L}^{\norm\cdot_H}$ where $\norm f_H^2:=\norm f_{W_{2,0}^1}^2+\norm{Hf}_{W_{2,0}^1}^2$. Then, $H$ is bounded as a map from $\Hp_1$ to $\Wp$ and we want to show that it is trace-class as a map from $\Hp_1$ to $\Hp$.

  To show $H\in\Sp_1(\Hp_1,\Hp)$ we first note that $\Wp$ is compactly embedded in $\Hp$, i.e., we can write the identity $\id:\ \Wp\to\Hp$ in the form
  \begin{align*}
    \id=\sum_{j\in\nn}s_j\langle\cdot,e_j\rangle_\Wp f_j
  \end{align*}
  for some orthonormal basis $(e_j)_{j\in\nn}$ of $\Wp$ and orthonormal basis $(f_j)_{j\in\nn}$ of $\Hp$ where $(s_j)_{j\in\nn}$ is the sequence of singular values. In particular, we obtain $\id\in\Sp_p(\Wp,\Hp)$ if and only if $(s_j)_{j\in\nn}\in\ell_p(\nn)$. Considering the $l_j$, we observe
  \begin{align*}
    \langle l_j,l_k\rangle_{L_2([0,2\pi])}=&
    \begin{cases}
      0&,\ j\ne k\\
      \pi&,\ j=k
    \end{cases}
  \end{align*}
  and
  \begin{align*}
    \langle l_j,l_k\rangle_{W^1_{2,0}([0,2\pi])}=&\langle ij^2l_j,ik^2l_k\rangle_{L_2([0,2\pi])}
    =
    \begin{cases}
      0&,\ j\ne k\\
      j^4\pi&,\ j=k
    \end{cases}.
  \end{align*}
  In other words, we can choose $e_j:=\frac{1}{j^2\sqrt\pi}l_j$ and $f_j:=\frac{1}{\sqrt\pi}l_j$. However, this directly yields
  \begin{align*}
    e_k=\sum_{j\in\nn}s_j\langle e_k,e_j\rangle_\Wp f_j=s_kf_k,
  \end{align*}
  which implies $(s_j)_{j\in\nn}=\l(\frac{1}{j^2}\r)_{j\in\nn}\in\ell_1(\nn)$ and, hence, $\id\in\Sp_1(\Wp,\Hp)$.

  Finally, $H\in L(\Hp_1,\Wp)$ and $\id\in\Sp_1(\Wp,\Hp)$ directly imply $H\in\Sp_1(\Hp_1,\Hp)$.
\end{example*}

\section{Proof of Theorem~\ref{thm:continuum-limit}}\label{sec:proof}
Following Proposition~\ref{prop:continuum-limit} we have already proven
\begin{align*}
  \bra\psi A\ket\psi=&\clim\langle A_\disc\rangle=\clim\bra{\psi_\disc} A_\disc\ket{\psi_\disc}=\clim\bra{\psi_n} A\ket{\psi_n}.
\end{align*}
Hence, it remains to show that $\bra\psi A\ket\psi=\lim_{n\to\infty}\langle\psi_n,A\psi_n\rangle_{\Hp}=\langle A\rangle_{\Gf}(0)$. However, since $\langle A\rangle_{\Gf}(0)$ is given by analytic extension, we cannot directly compute it. Instead, we will prove that the sequence $\l(z\mapsto\langle\psi_n,\Gf(z)A\psi_n\rangle_\Hp\r)_{n\in\nn}$ of meromorphic functions is compactly convergent on an open, connected, dense subset of $\cn$ with limit $\langle A\rangle_{\Gf}=\frac{\zeta(U\Gf A)}{\zeta(U\Gf)}$. Recall that for $\Re(z)<R$, both $U\Gf(z)A$ and $U\Gf(z)$ are of trace-class and
\begin{align*}
  \langle A\rangle_\Gf(z)=\lim_{T\to\infty}\frac{\tr(U\Gf(z)A)}{\tr(U\Gf(z))}.
\end{align*}
If $U\Gf(z)$ were a unitary generated by a Hamiltonian, then we could directly interpret this as a vacuum expectation value in some QFT. However, that is not the case, but for the discretized system, we can introduce $U_n=\texp\l(-\frac{i}\hbar \int_0^TP_nH(s)Q_nds\r)$ artificially again;
\begin{align*}
  \langle A_\disc\rangle_\Gf(z)
  =&\lim_{T\to\infty}\frac{\tr(U_nP_n(\Gf(z)A)Q_n)}{\tr(U_nP_n\Gf(z)Q_n)}\\
  =&\lim_{T\to\infty}\frac{\tr\l(U_n P_n(\Gf(z)A)Q_n\r)}{\tr U_n}\lim_{T\to\infty}\frac{\tr U_n}{\tr\l(U_n P_n\Gf(z)Q_n\r)}\\
  =&\frac{\langle\psi_n,\Gf(z)A\psi_n\rangle_\Hp}{\langle\psi_n,\Gf(z)\psi_n\rangle_\Hp}.
\end{align*}
Considering numerator and denominator of the right hand side separately, we can directly see why the assumption for $(z\mapsto\norm{\Gf(z)A\psi_n})_{n\in\nn}$ and $(z\mapsto\norm{\Gf(z)\psi_n})_{n\in\nn}$ to be locally bounded in $C(\cn)$ is necessary. Pointwise boundedness is necessary for both numerator and denominator to be convergent by Proposition~\ref{prop:continuum-limit}. However, pointwise convergence is not quite enough since we need compact convergence and by Vitali's theorem that requires local boundedness.
\begin{theorem}[Vitali's theorem]
  Let $\Omega\sse\cn$ be open and connected and $(f_n)_{n\in\nn}$ a locally bounded sequence of holomorphic functions on $\Omega$ such that $\{z\in \Omega;\ \lim_{n\to\infty}f_n(z)\text{ exists}\}$ has an accumulation point in $\Omega$. Then $(f_n)_{n\in\nn}$ is compactly convergent. 
\end{theorem}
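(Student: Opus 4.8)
The plan is to deduce Vitali's theorem from two classical pillars of complex analysis: Montel's theorem (local boundedness of a family of holomorphic functions implies normality) and the identity theorem. Throughout, ``compactly convergent'' means uniformly convergent on every compact subset of $\Omega$, equivalently locally uniformly convergent; this is the mode of convergence underlying the symbol $\clim$.

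First I would invoke Montel's theorem: local boundedness of $(f_n)_{n\in\nn}$ implies that $\{f_n;\ n\in\nn\}$ is a normal family, so every subsequence of $(f_n)_{n\in\nn}$ admits a further subsequence converging compactly on $\Omega$ to some holomorphic limit. (If a self-contained argument is preferred, Montel's theorem follows from the Cauchy integral estimates — which convert a uniform bound on a slightly larger disc into uniform bounds on all derivatives, hence local equicontinuity — together with the Arzel\`a--Ascoli theorem and a diagonal exhaustion of $\Omega$ by compacta.)

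Next I would show that all such subsequential limits agree. Write $E:=\{z\in\Omega;\ \lim_{n\to\infty}f_n(z)\text{ exists}\}$ and fix an accumulation point $z_0\in\Omega$ of $E$. If $(f_{n_k})_k\to g$ and $(f_{m_j})_j\to h$ compactly on $\Omega$, then $g$ and $h$ are holomorphic and, for every $z\in E$, both equal $\lim_{n\to\infty}f_n(z)$; hence $g=h$ on $E$. Since $E$ has an accumulation point in the connected open set $\Omega$, the identity theorem yields $g=h$ on all of $\Omega$. Denote this common function by $f$; it is holomorphic and coincides with $\lim_n f_n$ on $E$.

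Finally I would close by a standard contradiction argument: if $(f_n)_{n\in\nn}$ did not converge to $f$ compactly, there would be a compact $K\sse\Omega$, an $\eps>0$, and a subsequence $(f_{n_k})_k$ with $\sup_{z\in K}\abs{f_{n_k}(z)-f(z)}\ge\eps$ for all $k$; by normality this subsequence has a further subsequence converging compactly, which by the uniqueness just established must have limit $f$ — contradicting the lower bound on $K$. The main obstacle here is conceptual rather than computational: pointwise convergence on a set with an accumulation point does not by itself force either holomorphy of the limit or local uniformity of the convergence. The theorem works precisely because local boundedness rigidifies the sequence into a normal family (Montel), while the identity theorem rigidifies holomorphic functions so that every subsequential compact limit is forced to be the same; the only genuine analytic input is Montel's theorem, and everything else is bookkeeping.
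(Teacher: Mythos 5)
Your proof is correct and is the standard derivation of Vitali's theorem from Montel's theorem, the identity theorem, and the usual subsequence/uniqueness argument. The paper states this theorem as a classical tool and gives no proof of its own, so there is nothing to compare against; your argument fills that gap completely, and the one point worth keeping explicit (which you do handle) is that the limit function $f$ is well defined because Montel guarantees at least one compact subsequential limit and the identity theorem forces all such limits to coincide.
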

Furthermore, by the following two lemmas, the quotient $z\mapsto\frac{\langle\psi_n,\Gf(z)A\psi_n\rangle_\Hp}{\langle\psi_n,\Gf(z)\psi_n\rangle_\Hp}$ is compactly convergent on an open, dense, connected subset of $\cn$ and coincides with $\langle A\rangle_\Gf$.

\begin{lemma}
  Let $\fa z\in\cn:\ \Dp(z):=\langle\psi,\Gf(z)\psi\rangle_\Hp$. Then $\Dp$ has only isolated zeros. 
\end{lemma}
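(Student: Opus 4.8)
The plan is to invoke the identity theorem: a function holomorphic on a connected open set is either identically zero or has only isolated zeros. Since $\cn$ is connected, it therefore suffices to establish two things, namely that $\Dp$ is holomorphic on all of $\cn$ and that $\Dp\not\equiv0$.

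The non-vanishing is immediate from the defining property $\Gf(0)=1$ of the gauge: we have
\begin{align*}
  \Dp(0)=\langle\psi,\Gf(0)\psi\rangle_\Hp=\langle\psi,\psi\rangle_\Hp=\norm\psi_\Hp^2=1\ne0,
\end{align*}
because $\psi$ is the normalized vacuum. Hence $\Dp$ cannot be the zero function.

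For holomorphy I would argue via Vitali's theorem, which fits the hypotheses already imposed on $\Gf$ in Theorem~\ref{thm:continuum-limit} and reuses the machinery assembled above. For each $n$ the function $\Dp_n(z):=\langle\psi_n,\Gf(z)\psi_n\rangle_\Hp$ is holomorphic on $\cn$, since $\psi_n\in P_n[\Hp]\sse\Hp_0$, the family $\Gf$ is holomorphic, and on the finite-dimensional space $P_n[\Hp]$ this is just a holomorphic matrix coefficient. By Cauchy--Schwarz $\abs{\Dp_n(z)}\le\norm{\psi_n}_\Hp\norm{\Gf(z)\psi_n}_\Hp=\norm{\Gf(z)\psi_n}_\Hp$, so local boundedness of $(z\mapsto\norm{\Gf(z)\psi_n})_{n\in\nn}$ in $C(\cn)$ makes $(\Dp_n)_{n\in\nn}$ locally bounded. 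Proposition~\ref{prop:continuum-limit}(iv), applied for each fixed $z$ with $A=\Gf(z)$, gives the pointwise limit $\Dp_n(z)\to\langle\psi,\Gf(z)\psi\rangle_\Hp=\Dp(z)$. Vitali's theorem then upgrades this to compact convergence on the connected set $\cn$, so $\Dp$ is holomorphic on $\cn$. (Alternatively, one may observe directly that $z\mapsto\Gf(z)\psi$ is an $\Hp$-valued holomorphic map, as $\Gf$ is a genuine holomorphic operator family — poles only appear once traces are taken — whence $\Dp=\langle\psi,\Gf(\cdot)\psi\rangle_\Hp$ is holomorphic.)

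With both ingredients in hand, the identity theorem applied to the holomorphic, not-identically-zero function $\Dp$ on the connected domain $\cn$ yields that the zero set of $\Dp$ has no accumulation point in $\cn$, i.e., the zeros of $\Dp$ are isolated. The only delicate point is the holomorphy claim: one must be sure $\Dp$ is genuinely holomorphic on all of $\cn$ rather than merely meromorphic, which is exactly where the holomorphy of the operator family $\Gf$ (or the local-boundedness hypothesis feeding Vitali) is used; everything else is routine.
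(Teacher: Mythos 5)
Your proposal is correct and takes essentially the same route as the paper: holomorphy of $\Dp$ is obtained as the compact (Vitali) limit of the locally bounded, pointwise convergent sequence $z\mapsto\langle\psi_n,\Gf(z)\psi_n\rangle_\Hp$ furnished by Proposition~\ref{prop:continuum-limit}, and then the identity theorem together with $\Dp(0)=\langle\psi,\psi\rangle_\Hp=1$ (from $\Gf(0)=1$) rules out $\Dp\equiv0$. Your write-up merely makes explicit the Cauchy--Schwarz bound and the application of Vitali that the paper leaves implicit.
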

\begin{proof}
  By Proposition~\ref{prop:continuum-limit}, $\Dp$ is the pointwise limit of the sequence $\l(z\mapsto\langle\psi_n,\Gf(z)\psi_n\rangle_\Hp\r)_{n\in\nn}$ in the denominator and, since the denominator sequence is compactly convergent, $\Dp$ is holomorphic as well. Hence, $\Dp$ has only isolated zeros or $\Dp=0$ (identity theorem for holomorphic functions implies $\Dp=0$ if $[\{0\}]\Dp:=\{z\in\cn;\ \Dp(z)=0\}$ has an accumulation point). However, $\Gf(0)=1$ implies $\Dp(0)=\langle\psi,\psi\rangle_\Hp=1$ and thus the assertion.
\end{proof}

\begin{lemma}
  $\l(z\mapsto\frac{\langle\psi_n,\Gf(z)A\psi_n\rangle_\Hp}{\langle\psi_n,\Gf(z)\psi_n\rangle_\Hp}\r)_{n\in\nn}$ is compactly convergent to $\langle A\rangle_\Gf$ on $\cn\setminus[\{0\}]\Dp$.
\end{lemma}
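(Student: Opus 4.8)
The plan is to treat numerator and denominator of the quotient separately, pass to their pointwise limits by Proposition~\ref{prop:continuum-limit}, upgrade to compact convergence with Vitali's theorem, divide, and finally identify the resulting quotient with $\langle A\rangle_\Gf$, first on the trace-class half-plane $\cn_{\Re(\cdot)<R}$ and then on all of $\cn\setminus[\{0\}]\Dp$ via the identity theorem. Write $\mathcal N_n(z):=\langle\psi_n,\Gf(z)A\psi_n\rangle_\Hp$, $\mathcal D_n(z):=\langle\psi_n,\Gf(z)\psi_n\rangle_\Hp$, and $\mathcal N(z):=\langle\psi,\Gf(z)A\psi\rangle_\Hp$, and recall $\Dp(z)=\langle\psi,\Gf(z)\psi\rangle_\Hp$. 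Holomorphy of $\mathcal N_n$ and $\mathcal D_n$ on $\cn$ is inherited from the holomorphic family $\Gf$, and Cauchy--Schwarz with $\norm{\psi_n}_\Hp=1$ gives $\abs{\mathcal N_n(z)}\le\norm{\Gf(z)A\psi_n}_\Hp$ and $\abs{\mathcal D_n(z)}\le\norm{\Gf(z)\psi_n}_\Hp$, so the local boundedness hypothesis on $(z\mapsto\norm{\Gf(z)A\psi_n})_{n\in\nn}$ and $(z\mapsto\norm{\Gf(z)\psi_n})_{n\in\nn}$ makes $(\mathcal N_n)_{n\in\nn}$ and $(\mathcal D_n)_{n\in\nn}$ locally bounded. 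Under the assumptions of the theorem the operators $\Gf(z)A$ and $\Gf(z)$ satisfy, for each fixed $z$, the hypotheses of Proposition~\ref{prop:continuum-limit}~(iv), so $\mathcal N_n(z)\to\mathcal N(z)$ and $\mathcal D_n(z)\to\Dp(z)$ pointwise on $\cn$; Vitali's theorem then upgrades both to compact convergence on $\cn$, with $\mathcal N$ and $\Dp$ holomorphic, and by the preceding lemma $[\{0\}]\Dp$ is a discrete set, so $\cn\setminus[\{0\}]\Dp$ is open, dense and connected.

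I would then pass to the quotient. On a compact $K\sse\cn\setminus[\{0\}]\Dp$ one has $\delta:=\inf_{z\in K}\abs{\Dp(z)}>0$, and uniform convergence $\mathcal D_n\to\Dp$ on $K$ gives $n_0$ with $\abs{\mathcal D_n}\ge\frac\delta2$ on $K$ for $n\ge n_0$; hence each $\mathcal N_n/\mathcal D_n$ is holomorphic near $K$ for large $n$, and the identity
\begin{align*}
  \frac{\mathcal N_n}{\mathcal D_n}-\frac{\mathcal N}{\Dp}=\frac{\Dp\,(\mathcal N_n-\mathcal N)-\mathcal N\,(\mathcal D_n-\Dp)}{\mathcal D_n\,\Dp}
\end{align*}
together with the uniform bounds for $\abs{\Dp}$ and $\abs{\mathcal N}$ on $K$ and the uniform convergences $\mathcal N_n\to\mathcal N$, $\mathcal D_n\to\Dp$ gives $\mathcal N_n/\mathcal D_n\to\mathcal N/\Dp$ uniformly on $K$. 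Thus the sequence in the statement converges compactly on $\cn\setminus[\{0\}]\Dp$ to the holomorphic function $z\mapsto\mathcal N(z)/\Dp(z)$, and it remains only to show that this function equals $\langle A\rangle_\Gf$.

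For the identification I would work first on $\cn_{\Re(\cdot)<R}$, where $U\Gf(z)A$ and $U\Gf(z)$ are of trace class. The computation at the start of Section~\ref{sec:proof} gives $\mathcal N_n(z)/\mathcal D_n(z)=\langle A_\disc\rangle_\Gf(z)=\lim_{T\to\infty}\tr(U_nP_n(\Gf(z)A)Q_n)/\tr(U_nP_n\Gf(z)Q_n)$, while Lemma~\ref{lemma:convergence_of_traces}, applied with the trace-class operators $\Gf(z)A,\Gf(z)\in\Sp_1(\Hp)$ and $H(s)\in\Sp_1(\Hp_1,\Hp)$, gives $\tr(U_nP_n(\Gf(z)A)Q_n)/\tr(U_nP_n\Gf(z)Q_n)\to\tr(U\Gf(z)A)/\tr(U\Gf(z))$ as $n\to\infty$ for each fixed $T$, whose $T\to\infty$ limit is by definition $\langle A\rangle_\Gf(z)$. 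Letting $n\to\infty$ in the first expression gives $\mathcal N(z)/\Dp(z)$ by the previous step, so --- modulo the interchange of limits addressed below --- $\mathcal N/\Dp=\langle A\rangle_\Gf$ on $\cn_{\Re(\cdot)<R}\setminus[\{0\}]\Dp$, an open nonempty set with accumulation points in the connected set $\cn\setminus[\{0\}]\Dp$. Since $\mathcal N/\Dp$ is holomorphic on $\cn\setminus[\{0\}]\Dp$ and $\langle A\rangle_\Gf$ is meromorphic on $\cn$, the identity theorem extends the equality to all of $\cn\setminus[\{0\}]\Dp$ (so $\langle A\rangle_\Gf$ is in fact holomorphic there), and together with the compact convergence of the previous step this proves the lemma.

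The main obstacle is the identification, specifically the legitimacy of interchanging the limits $n\to\infty$ and $T\to\infty$ in $\lim_n\lim_T\tr(U_n\cdots)/\tr(U_n\cdots)$ against $\lim_T\lim_n\tr(U_n\cdots)/\tr(U_n\cdots)$. The naive route through the formal partition function $Z=\tr U$ is blocked because $Z$ is ill-defined; the trace-class hypothesis on $\Hp_1$ --- so that $H(s)\in\Sp_1(\Hp_1,\Hp)$ and $\Gf(z)A,\Gf(z)\in\Sp_1(\Hp)$ --- is exactly what makes Lemma~\ref{lemma:convergence_of_traces} available and the discretized partition functions $\tr U_n$ genuine numbers, so that the cancellation of partition functions becomes rigorous once the order of limits is controlled (a Moore--Osgood-type uniformity in $T$, or equivalently the energy-gap argument behind the discretized $T\to\infty$ limits $\tr(U_nP_n\,\cdot\,Q_n)/\tr U_n$ being well-behaved uniformly in $n$). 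Everything else --- the Cauchy--Schwarz bounds, the Vitali step, the quotient estimate on compacta, and the identity-theorem continuation --- is routine once this is in place.
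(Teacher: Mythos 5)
Your treatment of the convergence itself matches the paper's: separate compact convergence of numerator and denominator via Proposition~\ref{prop:continuum-limit}~(iv) plus Vitali, then a uniform lower bound on $\abs{\Dp}$ away from $[\{0\}]\Dp$ to pass to the quotient (the paper phrases this as local eventual boundedness away from zero near each $z_0$, you phrase it on compacta --- same estimate). The deviation, and the one genuine gap, is in the identification step. You route it through the discretized traces and Lemma~\ref{lemma:convergence_of_traces}, i.e.\ you compare $\lim_n\lim_T$ with $\lim_T\lim_n$ of $\tr(U_nP_n\Gf(z)AQ_n)/\tr(U_nP_n\Gf(z)Q_n)$, and you correctly observe that this requires an interchange of limits which you do not justify (a Moore--Osgood uniformity in $T$ is not established anywhere, and the energy-gap argument alone does not give it uniformly in $n$). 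As written, this leaves the identification unproved.

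The paper avoids the interchange entirely because the discretized traces are not needed for this step: you have already shown $\frac{\langle\psi_n,\Gf(z)A\psi_n\rangle_\Hp}{\langle\psi_n,\Gf(z)\psi_n\rangle_\Hp}\to\frac{\langle\psi,\Gf(z)A\psi\rangle_\Hp}{\langle\psi,\Gf(z)\psi\rangle_\Hp}$ pointwise on $\cn_{\Re(\cdot)<R}\setminus[\{0\}]\Dp$, so all that remains is the purely continuum identity
\begin{align*}
  \frac{\langle\psi,\Gf(z)A\psi\rangle_\Hp}{\langle\psi,\Gf(z)\psi\rangle_\Hp}=\lim_{T\to\infty}\frac{\tr(U\Gf(z)A)}{\tr(U\Gf(z))}=\langle A\rangle_\Gf(z),
\end{align*}
which follows from the standard trace-ratio representation $\langle\psi,B\psi\rangle_\Hp=\lim_{T\to\infty}\tr(UB)/Z$ applied to $B=\Gf(z)A$ and $B=\Gf(z)$ separately, with the (formal) partition function cancelling; for $\Re(z)<R$ both operators $U\Gf(z)A$ and $U\Gf(z)$ are trace-class so the ratio is well-defined without ever invoking $Z$ itself. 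No $n$ appears, so no interchange is needed. (Lemma~\ref{lemma:convergence_of_traces} is offered in the paper only as an alternative that makes the cancellation of $Z$ rigorous in the fully trace-class setting, not as the mechanism of this lemma's proof.) The final identity-theorem extension from the half-plane to $\cn\setminus[\{0\}]\Dp$ is the same in both arguments. If you replace your $\lim_n\lim_T$ versus $\lim_T\lim_n$ comparison by this direct continuum identification, your proof closes and coincides with the paper's.
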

\begin{proof}
  We obtain pointwise convergence on $\cn\setminus[\{0\}]\Dp$ directly since both numerator and denominator are compactly convergent and the denominator converges pointwise to $\Dp$. Compact convergence thus follows from Vitali's theorem once we have proven local boundedness of the sequence.

  Let $z_0\in\cn\setminus[\{0\}]\Dp$. Then there exists $\delta\in\rn_{>0}$ such that
  \begin{align*}
    \fa z\in B(z_0,\delta):\ \abs{\Dp(z)-\Dp(z_0)}<\frac{\abs{\Dp(z_0)}}{3},
  \end{align*}
  where $B(z_0,\delta):=\{z\in\cn;\ \abs{z-z_0}<\delta\}$, and given such $\delta\in\rn_{>0}$,
  \begin{align*}
    \ex N\in\nn\ \fa n\in\nn_{\ge N}\ \fa z\in B(z_0,\delta):\ \abs{\langle\psi_n,\Gf(z)\psi_n\rangle_\Hp-\Dp(z)}<\frac{\abs{\Dp(z_0)}}{3}.
  \end{align*}
  Hence, $\l(z\mapsto\langle\psi_n,\Gf(z)\psi_n\rangle_\Hp\r)_{n\in\nn}$ is locally eventually bounded away from zero on $\cn\setminus[\{0\}]\Dp$ and therefore $\l(z\mapsto\frac{\langle\psi_n,\Gf(z)A\psi_n\rangle_\Hp}{\langle\psi_n,\Gf(z)\psi_n\rangle_\Hp}\r)_{n\in\nn}$ locally eventually bounded on $\cn\setminus[\{0\}]\Dp$.

  Finally, we need to show that the compact limit is indeed $\langle A\rangle_\Gf$. Since we know that both $\langle A\rangle_\Gf$ and the compact limit of $\l(z\mapsto\frac{\langle\psi_n,\Gf(z)A\psi_n\rangle_\Hp}{\langle\psi_n,\Gf(z)\psi_n\rangle_\Hp}\r)_{n\in\nn}$ are holomorphic on an open, connected, and dense subset $\Omega$ of $\cn$, it suffices to show that $\l(z\mapsto\frac{\langle\psi_n,\Gf(z)A\psi_n\rangle_\Hp}{\langle\psi_n,\Gf(z)\psi_n\rangle_\Hp}\r)_{n\in\nn}$ converges pointwise to $\langle A\rangle_\Gf$ on a set that has an accumulation point in $\Omega$.

  Let $z\in\cn_{\Re\cdot<R}\setminus[\{0\}]\Dp$. Then
  \begin{align*}
    \lim_{n\to\infty}\frac{\langle\psi_n,\Gf(z)A\psi_n\rangle_\Hp}{\langle\psi_n,\Gf(z)\psi_n\rangle_\Hp}=&\frac{\langle\psi,\Gf(z)A\psi\rangle_{\Hp}}{\langle\psi,\Gf(z)\psi\rangle_{\Hp}}
    =\lim_{T\to\infty}\frac{\tr(U\Gf(z)A)}{\tr(U\Gf(z))}
    =\langle A\rangle_\Gf(z)
  \end{align*}
  completes the proof.
\end{proof}

Finally, since $0\notin[\{0\}]\Dp$, we can point evaluate $\l(z\mapsto\frac{\langle\psi_n,\Gf(z)A\psi_n\rangle_\Hp}{\langle\psi_n,\Gf(z)\psi_n\rangle_\Hp}\r)_{n\in\nn}$ and obtain 
\begin{align*}
  \lim_{n\to\infty}\langle\psi_n,A\psi_n\rangle_\Hp=\lim_{n\to\infty}\frac{\langle\psi_n,\Gf(0)A\psi_n\rangle_\Hp}{\langle\psi_n,\Gf(0)\psi_n\rangle_\Hp}=\langle A\rangle_\Gf(0)
\end{align*}
since $\Gf(0)=1$ which completes the proof of Theorem~\ref{thm:continuum-limit}

\section{Conclusion}
It has previously been observed\cite{hartung,hartung-iwota} that $\zeta$-regularization can be applied to Feynman's path integral with Lorentzian background. While being able to obtain physically correct vacuum expectation values in a number of different examples, it was unclear how physical this $\zeta$-regularized path integral is. On the other hand, the $\zeta$-regularized vacuum expectation values are highly interesting because they allow for non-perturbative computations in the continuum with Lorentzian background.

In Theorem~\ref{thm:continuum-limit} we have provided a proof that such $\zeta$-regularized vacuum expectation values $\langle A\rangle_\Gf(0)$ are in fact continuum limits and coincide with the ``true'' vacuum expectation values $\langle A\rangle$ provided the Hamiltonian satisfies certain assumptions. Most of these assumptions are relatively non-critical and can be addressed in a physically meaningful way, e.g., superselection sectors for non-degenerate vacua. Hence, the only assumption of Theorem~\ref{thm:continuum-limit} that is physically relevant is the assumption of an energy gap. As such Theorem~\ref{thm:continuum-limit} is generally applicable to generic quantum field theories. We have shown that at the example of the free radiation field for which we explicitly computed the ground state energy using the $\zeta$-regularized vacuum expectation values in the $N\to\infty$ photon limit (Section~\ref{sec:radiation}) and the full Fock space (Appendix~\ref{app:fockspace}).

The continuum limits used to prove $\langle A\rangle_\Gf(0)=\langle A\rangle$ can be expressed in terms of a discretization scheme $\disc$ which we described in Section~\ref{sec:disc}. This discretization scheme has a couple of remarkable properties, as well. On one hand, the discretized system still has a Lorentzian background, i.e., real time computations are possible. On the other hand, the discrete approximations of the vacuum are accessible on quantum computers where quantum computing is necessary precisely because we are working on a Lorentzian background. In fact, for large, non-trivial systems it is expected that quantum computations are the only way to obtain these expectation values. Hence, the $\zeta$-regularized vacuum expectation values are accessible using quantum computing, too. We have tested this computation of the continuum limit using the Rigetti Quantum Virtual Machine and the Rigetti 8Q~chip ``Agave''.

As an example, we implemented a version of the hydrogen atom in the spatially bounded universe $(-\pi,\pi)$. This forces the ground state to be highly non-trivial from the point of view of the discretization scheme. Nonetheless, the rate of convergence appears to be exponential in the number of qubits ``$\mathrm{error}\sim\exp(-1.92\cdot \mathrm{number\ of\ qubits})$''. We have shown this rate of convergence theoretically (through exact diagonalization of the Hamiltonian) and were able to reproduce it with a quantum computer. A simulation on the Rigetti Quantum Virtual Machine (ignoring noise) agreed with the theoretical values to more than $8$ significant digits using up to $5$ qubits, and an implementation using the Rigetti 8Q~chip ``Agave'' reproduced the theoretical results with the expected accuracy on one qubit. Two or more qubit computations are shown to require quantum processing units with improved fidelity.

Finally, it is important to note that lattice field theories can be expressed as a special case of the discretization scheme considered here, provided that coarser lattices are contained in finer lattices, that is, the coarser lattice spacing is an integer multiple of the finer lattice spacing. The basis used to construct the discretization scheme - which we chose to be the Fourier basis for the $1$-dimensional hydrogen atom - is a basis of piecewise linear functions in a lattice field theory. This is particularly interesting since it was also observed\cite{hartung,hartung-iwota} that lattice field theories can be $\zeta$-regularized in the same way. In other words, the methods described here open up the possibility to study lattice systems on a Lorentzian background. 

\begin{acknowledgments}
  The authors would like to express their gratitude Emilio Fedele and Nazar Miheisi for inspiring comments and conversations which helped to develop the work presented in this article. We would also like to thank the referee for many critical comments which helped improve the article. Furthermore, we sincerely thank Rigetti for access to the Rigetti Quantum Virtual Machine and the Rigetti 8Q quantum chip ``Agave'', as well as, Ryan Karle, Nick Rubin, and Nik Tezak for their support in implementing and running the code on the Rigetti hardware.
\end{acknowledgments}

%%%%%%%%%%%%%%%%%%%%
%%%%% appendix %%%%%
%%%%%%%%%%%%%%%%%%%%

\appendix
\section{Construction of the Hilbert space and Hamiltonian for the free radiation field}\label{app:radiation-construction}

Considering the free radiation field on $(\rn/X\zn)^3$, we have a gauge field $A$ and the electromagnetic field tensor $F:=dA$, i.e., $F_{\mu\nu}=\d_\mu A_\nu-\d_\nu A_\mu$. Choosing the Lorentz gauge $\d_\mu A^\mu=0$ (for a more detailed expose of the $\rn^3$ case see$\ $~\cite{tong}), the Lagrangian is given by
\begin{align*}
  \Lp=-\frac14 F_{\mu\nu}F^{\mu\nu}-\frac12(\d_\mu A^\mu)^2
\end{align*}
which generates the equation of motion $\d_\mu\d^\mu A=0$.

Turning the classical field $A$ and its canonical momentum~$\pi$, which satisfies $\pi^0=-\d_\mu A^\mu$ and $\pi^j=\d^jA^0-\d^0A^j$, into operators, we impose the canonical commutation relations $[A_\mu(x),A_\nu(x)]=0$, $[\pi^\mu(x),\pi^\nu(x)]=0$, as well as $[A_\mu(x),\pi_\nu(y)]=i\eta_{\mu\nu}\delta^{(3)}(x-y)$. We can now write down $A$ and $\pi$ in terms of creation and annihilation operators ($a^\lambda(p)^\dagger$ and $a^\lambda(p)$)
\begin{widetext}
  \begin{align*}
    A_\mu(x)=&\sum_{p\in\zn^3\setminus\{0\}}\sqrt{\frac{X}{2\pi\norm p_{\ell_2(3)}}}\sum_{\lambda=0}^3\eps_\mu^\lambda(p)\l(a^\lambda(p)e^{\frac{2\pi i}{X}\langle p,x\rangle_{\ell_2(3)}}+a^\lambda(p)^\dagger e^{-\frac{2\pi i}{X}\langle p,x\rangle_{\ell_2(3)}}\r)\\
    \pi^\mu(x)=&\sum_{p\in\zn^3}i\sqrt{\frac{2\pi\norm p_{\ell_2(3)}}{X}}\sum_{\lambda=0}^3(\eps^\mu)^\lambda(p)\l(a^\lambda(p)e^{\frac{2\pi i}{X}\langle p,x\rangle_{\ell_2(3)}}-a^\lambda(p)^\dagger e^{-\frac{2\pi i}{X}\langle p,x\rangle_{\ell_2(3)}}\r)
  \end{align*}
\end{widetext}
where the $4$-vectors $\eps^\lambda$ are the four polarization vectors. Furthermore, we should note that the momentum is endowed with a $+i$ rather than the familiar $-i$ which is due to the Heisenberg picture in which $\pi^\mu=-\d^0A^\mu+\ldots$ generates a factor $+i$. As for the polarization vectors, we choose $\eps^0$ to be timelike and $\eps^1$, $\eps^2$, and $\eps^3$ spacelike with $\eps^3$ longitudinal and $\eps^1$ and $\eps^2$ transversal, i.e.,
\begin{align*}
  \forall \lambda\in\{1,2\}:\ \eps^\lambda_\mu\vec p^\mu=0
\end{align*}
where $\vec p=(\norm p_{\ell_2(3)},p)$ is the photon $4$-momentum. In other words, for momenta $p\propto(1,0,0,1)$, $(\eps^\lambda)_{0\le\lambda<4}$ can be chosen to be the canonical basis of $\rn^4$ and all other polarizations arise applying the appropriate Lorentz transform.

Now we can translate the commutation relations and obtain 
\begin{align*}
  [a^\lambda(p),a^{\lambda'}(q)]=[a^\lambda(p)^\dagger,a^{\lambda'}(q)^\dagger]=0\text{ and }[a^\lambda(p),a^{\lambda'}(q)^\dagger]=-\eta^{\lambda,\lambda'}\delta_{p,q}
\end{align*}
which is fine for spacelike $\lambda$s but 
\begin{align*}
  [a^0(p),a^{0}(q)^\dagger]=-\delta_{p,q}
\end{align*}
is problematic. Since the Lorentz invariant vacuum $\ket0$ is defined via 
\begin{align*}
  \fa 0\le\lambda< 4\ \fa p\in\zn^3:\ a^\lambda(p)\ket0=0,
\end{align*}
we can generate one-photon states $\ket{p,\lambda}=a^\lambda(p)^\dagger\ket0$ and observe
\begin{align*}
  \langle p,0|q,0\rangle=\bra0a^0(p)a^0(q)^\dagger\ket0=-\delta_{p,q},
\end{align*}
that is, $\ket{p,0}$ has negative norm. This is possible because we haven't yet incorporated the Lorentz gauge $\d_\mu A^\mu=0$. To do so, we need to decompose $A_\mu(x)$ into $A_\mu^+(x)+A_\mu^-(x)$ where
\begin{align*}
  A_\mu^+(x)=&\sum_{p\in\zn^3\setminus\{0\}}\sqrt{\frac{X}{2\pi\norm p}}\sum_{\lambda=0}^3\eps_\mu^\lambda(p)a^\lambda(p)e^{\frac{2\pi i}{X}\langle p,x\rangle}\\
  A_\mu^-(x)=&\sum_{p\in\zn^3\setminus\{0\}}\sqrt{\frac{X}{2\pi\norm p}}\sum_{\lambda=0}^3\eps_\mu^\lambda(p)a^\lambda(p)^\dagger e^{-\frac{2\pi i}{X}\langle p,x\rangle}.
\end{align*}
Then a state $\ket\psi$ is physical if and only if 
\begin{align*}
  \d^\mu A_\mu^+\ket\psi=0
\end{align*}
as this ensures
\begin{align*}
  \fa \phi,\psi\text{ physical}:\ \bra\phi\d^\mu A_\mu\ket\psi=0.
\end{align*}
This condition is known as Gupta-Bleuler condition.

However, we still do not have a Hilbert space since we have a degenerate vacuum which means that we only have a semi-inner product. Consider a Fock space basis of the form $\ket{\psi_T}\ket\phi$ where $\ket{\psi_T}$ contains the transversal photons and $\ket\phi$ the timelike and longitudinal photons. The Gupta-Bleuler condition then implies $(a^3(p)-a^0(p))\ket\phi=0$. In other words, any state that contains a timelike photon of momentum $p$ also contains a longitudinal photon of momentum $p$. It is also easy to verify that for any states $\ket{\phi_m}$ with $m$ pairs of timelike and longitudinal photons and $\ket{\phi_n}$ with $n$ pairs of timelike and longitudinal photons $\langle\phi_m|\phi_n\rangle=\delta_{m0}\delta_{n0}$ holds. Taking the quotient with respect to all norm-zero states, we need to make sure that all physical operators $A$ have the same expectation $\bra\phi A\ket\phi$ with respect to all norm-zero states $\ket\phi$.

For the Hamiltonian
\begin{align*}
  H_0=\sum_{p\in\zn^3\setminus\{0\}}\frac{2\pi\norm p}{X}\l(-a^0(p)^\dagger a^0(p)+\sum_{j=1}^3a^j(p)^\dagger a^j(p)\r)
\end{align*}
we can directly check this since the $a^0(p)\ket\phi=a^3(p)\ket\phi$ implies 
\begin{align*}
  \bra{\psi_T,\phi}a^0(p)^\dagger a^0(p)\ket{\psi_T,\phi}=\bra{\psi_T,\phi}a^3(p)^\dagger a^3(p)\ket{\psi_T,\phi}
\end{align*}
and, hence,
\begin{align*}
  \bra{\psi_T,\phi}H_0\ket{\psi_T,\phi}
  =&\bra{\psi_T,\phi}\sum_{p\in\zn^3\setminus\{0\}}\frac{2\pi\norm p_{\ell_2(3)}}{X}\sum_{j=1}^2a^j(p)^\dagger a^j(p)\ket{\psi_T,\phi}\\
  =&\bra{\psi_T}\ubr{\sum_{p\in\zn^3\setminus\{0\}}\frac{2\pi\norm p_{\ell_2(3)}}{X}\sum_{j=1}^2a^j(p)^\dagger a^j(p)}_{=:H}\ket{\psi_T}
\end{align*}
which is independent of $\phi$. More generally, $\bra{\psi_T,\phi}G_0\ket{\psi_T,\phi}=\bra{\psi_T}G\ket{\psi_T}$ can be checked to be true for any gauge-invariant operator $G_0$. In other words, the single photon Hilbert space $\Hp_1$ is spanned by the transversal photon states $\ket{\psi_T}$, i.e.,
\begin{align*}
  \Hp_1=\ell_2\l(\zn^3,\cn^2\r)\ominus\lin\{\ket{0,2}\}
\end{align*}
with basis $(\psi_{p,j})_{p\in\zn^3,j\in\{1,2\}}$ defined as
\begin{align*}
  \fa p\in\zn^3\ \fa j\in\{1,2\}:\ \psi_{p,j}:=\ket{p,j}=a^j(p)^\dagger\ket0,
\end{align*}
and the Hamiltonian is given by
\begin{align*}
  H=\sum_{p\in\zn^3}\frac{2\pi\norm p_{\ell_2(3)}}{X}\l(a^1(p)^\dagger a^1(p)+a^2(p)^\dagger a^2(p)\r)
\end{align*}
where 
\begin{align*}
  \fa p,q\in\zn^3\ \fa j,k\in\{1,2\}:\ a^j(p)^\dagger a^j(p)\psi_{q,k} = \delta_{p,q}\delta_{j,k}.
\end{align*}
Alternatively, we may choose a formulation with 
\begin{align*}
  \Hp_1=L_2\l((\rn/X\zn)^3,\cn^2\r)\ominus\lin\l\{
  \begin{pmatrix}
    0\\1
  \end{pmatrix}
  \r\}
\end{align*}
with basis
\begin{align*}
  \psi_{p,0}(x)=&\frac{1}{\sqrt{X^3}}\exp\l(\frac{2\pi i}{X}\langle p,x\rangle_{\ell_2(3)}\r)
  \begin{pmatrix}
    1\\0
  \end{pmatrix}\\
  \psi_{p,1}(x)=&\frac{1}{\sqrt{X^3}}\exp\l(\frac{2\pi i}{X}\langle p,x\rangle_{\ell_2(3)}\r)
  \begin{pmatrix}
    0\\1
  \end{pmatrix}
\end{align*}
for $p\in\zn^3$ (where $\lin\{\psi_{0,1}\}$ is the orthocomplement of $\Hp_1$ in $L_2\l((\rn/X\zn)^3,\cn^2\r)$) and Hamiltonian
\begin{align*}
  H_1=\abs\nabla\id_{\cn^2}.
\end{align*}
Since the Hamiltonian is diagonalized, it is easy to compute the energy of any given state $\psi=\sum_{(p,j)\in\zn^3\times 2\setminus\{(0,1)\}}\alpha_{p,j}\psi_{p,j}$
\begin{align*}
  \langle\psi_{p,j},H_1\psi_{p,j}\rangle_\Hp=\sum_{(p,j)\in\zn^3\times 2\setminus\{(0,1)\}}\abs{\alpha_{p,j}}^2\frac{2\pi\norm p_{\ell_2(3)}}{X}
\end{align*}
which is minimal if and only if $p=0$, i.e., $\ket0=\psi_{0,0}$.

Choosing any increasing sequence of sets $J_n\sse\{\psi_{p,j};\ (p,j)\in\zn^3\times 2\setminus\{(0,1)\}\}$ with $\#J_n=n$, $J_n\sse J_{n+1}$, and $\bigcup_{n\in\nn}J_n=\zn^3\times 2\setminus\{(0,1)\}$, $\psi_{0,0}$ will eventually be contained in each $P_n[\Hp_1]$ trivializing the continuum limit. Furthermore, this is the expected result as the vacuum should not contain any photons. Yet, since we chose to quotient out norm-zero states of the Fock space earlier, this vacuum does reproduce the non-trivial vacuum containing pairs of longitudinal and timelike photons.

In order to consider $N$-photon states $\ket P=\ket{(p_1,j_1),\ldots,(p_N,j_N)}$, we use the $N$-fold symmetric tensor product\footnote{If we were working in a fermionic theory, the symmetric tensor product would have to be replaced with the antisymmetric tensor product.} $S\bigotimes_{k=1}^N\Hp_1$ of $\Hp_1$ and set $\ket P=\ket{p_1,j_1}\otimes\ldots\otimes\ket{p_N,j_N}$. The Hamiltonian in $S\bigotimes_{k=1}^N\Hp_1$ is given by 
\begin{align*}
  H_N=\sum_{k=1}^N\l(\bigotimes_{m=1}^{k-1}\id_{\Hp_1}\r)\otimes H\otimes\l(\bigotimes_{m=k+1}^{N}\id_{\Hp_1}\r).
\end{align*}

\section{The free radiation field in the Fock space}\label{app:fockspace}
In this appendix, we will discuss the changes to Section~\ref{sec:radiation} necessary to discuss the free radiation field in the Fock space. The Fock space $\Hp_F$ is similar to the up-to-$N$-particle space $\Hp_{\le N}$ in the sense that we do not consider all states of up to $N$ particles but the smallest Hilbert space containing all Hilbert spaces with up to $N$ particles. The Fock space $\Hp_F$ thus minimalizes $\Hp_F\supseteq\bigcup_{n\in\nn}\Hp_{\le N}$. 

Given the single particle Hilbert space $\Hp_1$ - which in the case of the free radiation field is $L_2((\rn/X\zn)^3,\cn^2)\setminus\cn^2$, i.e., the up-to-one-particle Hilbert space with the vacuum removed - the (exactly) $N$-particle space $\Hp_N$ is given by the symmetric tensor product
\begin{align*}
  \Hp_N=S\otimes_{j=1}^N\Hp_1
\end{align*}
and the $N$ particle state $\ket P=\ket{P_0,\ldots,P_{N-1}}=\prod_{j=0}^{N-1}a_{P_j}^\dagger\ket0$ is represented by
\begin{align*}
  \ket P=\frac{1}{N!}\sum_{\pi\in S_N}\bigotimes_{j=0}^{N-1}\ket{P_{\pi(j)}}
\end{align*}
where $S_N$ denotes the symmetric group on the set $N$. The full Fock space $\Hp$ is then the Hilbert space complete direct sum of all exactly-$N$-particle Hilbert spaces
\begin{align*}
  \Hp=\bigoplus_{N\in\nn}\Hp_N=\bigoplus_{N\in\nn}S\otimes_{j=1}^N\Hp_1.
\end{align*}
The Hamiltonian keeps each of the $\Hp_N$ invariant and on each $\Hp_N$ is simply the restriction of  
\begin{align*}
  H_N=\sum_{k=1}^N\l(\bigotimes_{m=1}^{k-1}\id_{\Hp_1}\r)\otimes H\otimes\l(\bigotimes_{m=k+1}^{N}\id_{\Hp_1}\r).
\end{align*}
Thus, the Fock space Hamiltonian $H_F$ is given by
\begin{align*}
  H_F=\bigoplus_{N\in\nn}H_N=\diag\l(\l(\sum_{n=1}^{N}\abs{\nabla_{\rn^{3}}}\r)_{N\in\nn}\r).
\end{align*}
If we now na\"ively attempt to extend the computation of the up-to-$N$-particle Hilbert space to the entire Fock space, we observe the following
\begin{widetext}
  \begin{align*}
    &\text{``}\langle H_F\rangle_\Gf(z)\text{''}\\
    =&\lim_{T\to\infty}\frac{\int_{\bigtimes_{N\in\nn}(\rn^3)^N}\sum_{N\in\nn}\tr\l(e^{-iT\sum_{n=1}^N\norm{\xi_n}_{\ell_2(3)}}\sum_{n=1}^N\norm{\xi_n}_{\ell_2(3)}\prod_{m=1}^N\norm{\xi_m}_{\ell_2(3)}^{z}\id_{\cn^2}\r)d\xi}{\int_{\bigtimes_{N\in\nn}(\rn^3)^N}\sum_{N\in\nn}\tr\l(e^{-iT\sum_{n=1}^N\norm{\xi_n}_{\ell_2(3)}}\prod_{m=1}^N\norm{\xi_m}_{\ell_2(3)}^{z}\id_{\cn^2}\r)d\xi}\\
    =&\lim_{T\to\infty}\frac{\sum_{N\in\nn}\sum_{n=1}^N\prod_{m=1}^N\int_{\rn^3}e^{-iT\norm{\xi_m}_{\ell_2(3)}}\norm{\xi_m}_{\ell_2(3)}^{z+\delta_{mn}}d\xi_m}{\sum_{N\in\nn}\prod_{m=1}^N\int_{\rn^3}e^{-iT\norm{\xi_m}_{\ell_2(3)}}\norm{\xi_m}_{\ell_2(3)}^{z}d\xi_m}\\
    =&\lim_{T\to\infty}\frac{\sum_{N\in\nn}N(\vol\d B_{\rn^3})^N\int_{\rn_{>0}}e^{-iTr}r^{z+3}dr\l(\int_{\rn_{>0}}e^{-iTr}r^{z+2}dr\r)^{N-1}}{\sum_{N\in\nn}(\vol\d B_{\rn^3})^N\l(\int_{\rn_{>0}}e^{-iTr}r^{z+2}dr\r)^N}\\
    =&\lim_{T\to\infty}\frac{\sum_{N\in\nn}N(\vol\d B_{\rn^3})^N\Gamma(z+4)\Gamma(z+3)^{N-1}(iT)^{-z-4}(iT)^{-Nz+z-3N+3}}{\sum_{N\in\nn}(\vol\d B_{\rn^3})^N\Gamma(z+3)^N(iT)^{-Nz-3N}}
  \end{align*}
\end{widetext}
which is a problem because neither numerator nor denominator converge for $T\gg1$ and $\Re(z)\ll0$.

In other words, we need another regularizing factor to control the summation with respect to $N$. In this case, we can define
\begin{align*}
  \alpha_N(z):=N^z(\vol\d B_{\rn^3})^z\frac{\Gamma(4)\Gamma(3)^N}{\Gamma(z+4)\Gamma(z+3)^N}(iT)^{Nz}.
\end{align*}
Then $\fa N\in\nn:\ \alpha_N(0)=1$ and 
\begin{widetext}
  \begin{align*}
    &\langle H_F\rangle_\Gf(z)\\
    =&\lim_{T\to\infty}\frac{\int_{\bigtimes_{N\in\nn}(\rn^3)^N}\sum_{N\in\nn}\tr\l(e^{-iT\sum_{n=1}^N\norm{\xi_n}_{\ell_2(3)}}\sum_{n=1}^N\norm{\xi_n}_{\ell_2(3)}\alpha_N(z)\prod_{m=1}^N\norm{\xi_m}^{z}\id_{\cn^2}\r)d\xi}{\int_{\bigtimes_{N\in\nn}(\rn^3)^N}\sum_{N\in\nn}\tr\l(e^{-iT\sum_{n=1}^N\norm{\xi_n}_{\ell_2(3)}}\alpha_N(z)\prod_{m=1}^N\norm{\xi_m}_{\ell_2(3)}^{z}\id_{\cn^2}\r)d\xi}\\
    =&\lim_{T\to\infty}\frac{\sum_{N\in\nn}\alpha_N(z)N(\vol\d B_{\rn^3})^N\Gamma(z+4)\Gamma(z+3)^{N-1}(iT)^{-Nz-3N-1}}{\sum_{N\in\nn}\alpha_N(z)(\vol\d B_{\rn^3})^N\Gamma(z+3)^N(iT)^{-Nz-3N}}\\
    =&\lim_{T\to\infty}\frac{\sum_{N\in\nn}N^{z+1}(\vol\d B_{\rn^3})^{N+z}\Gamma(4)\Gamma(3)^N\Gamma(z+3)^{-1}(iT)^{-3N-1}}{\sum_{N\in\nn}N^z(\vol\d B_{\rn^3})^{N+z}\Gamma(4)\Gamma(3)^N\Gamma(z+4)^{-1}(iT)^{-3N}}
  \end{align*}
\end{widetext}
which does have convergent numerator and denominator for $\Re(z)\ll0$ and the quotient is in $O\l(T^{-1}\r)$, i.e.,
\begin{align*}
  \langle H_F\rangle_\Gf=0
\end{align*}
which coincides with the $N\to\infty$ particle limit computation in Section~\ref{sec:radiation}.

\nocite{*}
\bibliography{zeta-cont-lim}

\end{document}